\newcommand{\btype}{b}           
\newcommand{\vtype}{\alpha}      
\newcommand{\ttype}{\sigma}      
\newcommand{\lmtype}{\mtype^{l}}
\newcommand{\rmtype}{\mtype^{r}}
\newcommand{\mtype}{\tau}        
\newcommand{\xtype}{\chi}        
\newcommand{\gkind}{\kappa}      
\newcommand{\ukind}{\mathcal{U}} 
\newcommand{\glab}{l}            
\newcommand{\lglab}{l^{l}}
\newcommand{\rglab}{l^{r}}
\newcommand{\gterm}{M} 
\newcommand{\vterm}{x} 
\newcommand{\cterm}{k} 
\newcommand{\geqs}{E}             
\newcommand{\tenv}{\Gamma}        
\newcommand{\kenv}{K}             
\newcommand{\skenv}{S\!K}         
\newcommand{\slab}{\mathcal{L}}   
\newcommand{\norm}{\da}           
\newcommand{\normof}[1]{\overline{#1}}           
\newcommand{\red}{\Ra_{\xtype}}  
\newcommand{\subs}{S}
\newcommand{\id}{\textit{id}}
\newcommand{\dom}[1]{\textit{dom}(#1)}
\newcommand{\ftv}[1]{\textit{FTV}(#1)}
\newcommand{\eftv}[2]{\textit{EFTV}(#1,#2)}
\newcommand{\cls}[3]{\textit{Cls}(#1,#2,#3)}
\newcommand{\U}{\mathcal{U}}
\newcommand{\unify}[2]{\U(#1,#2)}               
\newcommand{\infer}[3]{\textit{WK}(#1,#2,#3)}   
\newcommand{\base}{\mathcal{B}}
\newcommand{\xbase}[1]{\base(#1)}              
\newcommand{\atype}[2]{#1 \rightarrow #2}  
\newcommand{\extype}[3]{#1 + \{#2 : #3\}}  
\newcommand{\cntype}[3]{#1 - \{#2 : #3\}}  
\newcommand{\kind}[1]{\{\!\{#1\}\!\}}
\newcommand{\rkind}[2]{\{\!\{#1 \mmid #2\}\!\}}
\newcommand{\app}[2]{#1 \ #2}                                               
\newcommand{\abs}[2]{\lambda #1.#2}                                         
\newcommand{\letin}[3]{\text{let} \ #1 = #2 \ \text{in} \ #3}               
\newcommand{\sel}[2]{#1.#2}                                                 
\newcommand{\modif}[3]{\text{modify}(#1,#2,#3)}                             
\newcommand{\ext}[3]{\text{extend}(#1,#2,#3)}                               
\newcommand{\cnt}[2]{#1 \lminus #2}                                         
\newcommand{\Ra}{\Rightarrow}
\newcommand{\da}{\downarrow}
\newcommand{\fields}{\mathit{F}}
\newcommand{\lfields}{\fields^{l}}
\newcommand{\rfields}{\fields^{r}}
\newcommand{\efields}[1]{\fields_{e(#1)}}
\newcommand{\cfields}[1]{\fields_{c(#1)}}
\newcommand{\lminus}{\mathbin{{\setminus}\mspace{-5mu}{\setminus}}}
\newcommand{\mmid}{\mid\!\mid}
\newtheorem{theorem}{Theorem}[section]
\newtheorem{lemma}[theorem]{Lemma}
\newtheorem{definition}[theorem]{Definition}
\newtheorem{example}[theorem]{Example}
\newtheorem{proposition}[theorem]{Proposition}
\newtheorem{corollary}[theorem]{Corollary}
\title{An ML-style Record Calculus with Extensible Records\thanks{This work is financed by National Funds through the Portuguese funding agency, FCT - Fundação para a Ciência e a Tecnologia, within project UIDB/50014/2020.}}
\author{Sandra Alves
\institute{ DCC-FCUP \& CRACS \\ University of Porto, Porto, Portugal
}
\email{sandra@fc.up.pt}
\and
Miguel Ramos
\institute{DCC-FCUP \& LIACC\\ University of Porto, Porto, Portugal}
\email{jmiguelsramos@gmail.com}
}
\begin{document}
\maketitle

\begin{abstract} 
In this work, we develop a polymorphic record calculus with extensible records. Extensible records are records that can have new fields added to them, or preexisting fields removed from them. We also develop a static type system for this calculus and a sound and complete type inference algorithm. Most ML-style polymorphic record calculi that support extensible records are based on row variables. We present an alternative construction based on the polymorphic record calculus developed by Ohori. Ohori based his polymorphic record calculus on the idea of kind restrictions. This allowed him to express polymorphic operations on records such as field selection and modification. With the addition of extensible types, we were able to extend Ohori's original calculus with other powerful operations on records such as field addition and removal.

\end{abstract}

\section{Introduction}
A record is a basic data structure that provides a flexible way to aggregate data and which is present in several programming and specification languages.
Record polymorphism has been studied using approaches based on subtyping~\cite{CardelliW85,Cardelli90}, kinded quantification~\cite{Ohori95,OhoriB88} and, most commonly, on the mechanism of row variables~\cite{Wand87,Jategaonkar93,Remy92,Wand89}, among others.
Row variables range over finite sets of field types, which are constructed by extension starting from the empty row $\{||\}$. 
With the aim of developing a sound polymorphic programming language that supported labelled records and variants, while providing efficient compilation, Ohori~\cite{Ohori95} followed an approach based on the notion of a kind. In Ohori's system, variables ranging over record types, are annotated with a specification that represents the fields the record is expected to contain. This refines ML-type quantification to what is called kinded quantification, of the form $\forall \vtype :: \gkind.\ttype$, where type variable $\vtype$ is constrained to range only over the set of types denoted by kind $\gkind$. A kind $\gkind$ is either the universal kind $\ukind$, denoting the set of all types, or a record kind of the form $\kind{\glab_1 : \mtype_1, \dots, \glab_n : \mtype_n}$, denoting the set of all record types that contain fields $\glab_1, \dots, \glab_n$, with types $\mtype_1, \dots, \mtype_n$, respectively.

The type inference algorithm in~\cite{Ohori95} provides a sound extension of ML's let-polymorphism~\cite{DamasM82}, which allows for a polymorphic treatment of record-based operations such as field selection and modification, but with the limitation of lacking support for extensible records. This limitation is often accepted in practical implementations of languages with record types, in a trade for efficiency, or due to the difficulty in guaranteeing the correctness of types for more flexible operations.

Nevertheless, record types, and in particular polymorphic record types, are most relevant not only in the context of ML-style programming languages, but also due to its relevance to areas were data aggregation and manipulation is a key feature.  The style of record polymorphism developed by Ohori was recently explored in the context of event processing, through the development of a domain specific higher-order functional language for events~\cite{AlvesFR20}, with a typing system that was both a restriction and an extension of Ohori’s polymorphic record calculus. Although this domain specific language proved to be adequate to deal with the notion of generic events, with the potential of providing a formal semantics to Complex Event Processing (CEP) systems, the lack of support for extensibility was once more a limitation, since the ability to extend a record with a new field or remove an existing field from a record is often useful in the context of CEP.

In this paper we address this limitation and develop a polymorphic record calculus with extensible records, that is, records that can have new fields added to them, or preexisting fields removed from them. To that end, we refine the notion of record kind, such that, a kind is of the form $$\rkind{\glab_1 : \mtype_1, \dots, \glab_n : \mtype_n}{\glab'_1 : \mtype_1', \dots, \glab'_m : \mtype'_m}$$ and denotes the set of all record types that contain the fields before $\mmid$ and do not contain the fields after $\mmid$. This system allows us to represent polymorphic versions of various types of record operations, such as field selection and modification, just as before, but is also powerful enough to represent field extension (the operation that adds a new field to a record) and field removal (the operation that removes a preexisting field from a record). The notion of record types is also extended to accommodate the notion of extensible record types. We refine the notion of kinded restrictions, using the refined notion of record kind, so that we can impose conditions on the extension and removal operations: a record can only be extended with a field that it does not already contain, and one can only remove existing fields from records. We extend Ohori's calculus with two new operations on records: $\cnt{\gterm}{\glab}$ removes field $\glab$ from term $\gterm$, provided that $\gterm$ is a record with that field, and $\ext{\gterm_1}{\glab}{\gterm_2}$ extends term $\gterm_1$ with a field labelled $\glab$ with value $\gterm_2$, provided that $\gterm_1$ is a record not containing $\glab$.  These restrictions are imposed by the type system. We present a sound and complete ML-style type inference algorithm for our extended calculus. 

The main contributions of this paper are:
\begin{itemize}
\item The design definition of an ML-style record calculus with operations for field extension and field removal.
\item An ML-style type system for this calculus based on the notion of extensible types.
\item A sound and complete type inference algorithm, that extends the ML-style record calculus in~\cite{Ohori95}.
\end{itemize} 
\paragraph{Overview} In Section~\ref{sec:calculus} we define our ML-style record calculus with extensible records. In Section~\ref{sec:ta} we define a type system for our calculus and in Section~\ref{sec:ti} we present a type inference algorithm, which is proved to be sound and complete. We discuss related work in Section~\ref{sec:rw} and we finally conclude and discuss further work in Section~\ref{sec:conc}.

\section{An ML-style Record Calculus with Extensible Records}
\label{sec:calculus}
In this section we introduce an ML-style record calculus with extensible records.  Our set of terms follows the one used by Ohori in~\cite{Ohori95}, except for the exclusion of variants and the addition of two new terms: one for adding a new field to a record; and another for removing a preexisting field from a record. We assume some familiarity with the $\lambda$-calculus (see~\cite{Barendregt85} for a detailed reference).

\subsection{Terms}
We start by formally defining the set of terms. In the following, let $k$ range over a countable set of constants, $x,y,z,\dots$ range over a countable set of variables and $\glab,\glab_1,\dots$ range over a countable set $\slab$ of labels. Additionally we assume a finite set of base types $\mathbb{B}$, ranged by $\btype$.

\begin{definition}
 The set of terms is given by the following grammar:
        \[
                \begin{array}{lcl}
                        \gterm & ::= & \vterm \mid \cterm^{\btype} \mid \abs{\vterm}{\gterm} \mid \gterm \gterm \mid \letin{\vterm}{\gterm}{\gterm} \\
                               &     & \{\glab = \gterm, \dots, \glab = \gterm\} \mid \gterm.\glab \mid \modif{\gterm}{\glab}{\gterm} \mid  \cnt{\gterm}{\glab} \mid \ext{\gterm}{\glab}{\gterm}
                \end{array}
        \]
\end{definition}

\subsection{Types and Kinds}
We now define the set of types and kinds. Following Damas and Milner's presentation of ML~\cite{DamasM82}, we divide the set of types into monotypes (ranged over by $\mtype$) and polytypes (ranged over by $\ttype$). Monotypes can be base types (represented by $\btype$, which is obtained from a given set of base types), extensible types (ranged over by $\xtype$), and arrow types (of the form $\atype{\mtype}{\mtype}$). Polytypes can be monotypes, or quantified types of the form $\forall \vtype::\gkind.\ttype$, where $\gkind$ is a kind restriction and $\vtype$ is a type variable quantified over the set of types denoted by $\gkind$. Extensible types can be type variables (represented by $\vtype$, which is obtained from a given countably infinite set of type variables), record types (of the form $\{\glab : \mtype, \dots, \glab : \mtype\}$, where $\glab$ is obtained from a given set of labels), type extensions (of the form $\extype{\xtype}{\glab}{\mtype}$), or type contractions (of the form $\cntype{\xtype}{\glab}{\mtype}$). Type extensions of the form $\extype{\xtype}{\glab}{\mtype}$ are the type of records of type $\xtype$ that are extended with a new field with label $\glab$ and type $\mtype$ and type contractions of the form $\cntype{\xtype}{\glab}{\mtype}$ are the type of records of type $\xtype$ that have a preexisting field with label $\glab$ and type $\mtype$ removed.
\begin{definition}The set of types $\ttype$ and kinds $\gkind$ are specified by the following grammar:
        %
        \begin{align*}
                \ttype & ::= \mtype \mid  \forall \vtype::\gkind.\ttype \\
                \mtype & ::= \btype \mid \xtype \mid \atype{\mtype}{\mtype} \\
                \xtype & ::= \vtype \mid \{\glab:\mtype, \dots, \glab:\mtype\} \mid \extype{\xtype}{\glab}{\mtype} \mid \cntype{\xtype}{\glab}{\mtype}\\
                \gkind & ::= \ukind \mid \rkind{\glab : \ttype, \dots, \glab : \ttype}{\glab : \ttype, \dots, \glab : \ttype}
        \end{align*}

\end{definition}
 All empty records are typed with the same type: the empty record type $\{\}$. Note that a type variable kinded with the universal kind can be instantiated with any type, while a type variable kinded with the empty kind restriction $\rkind{}{}$ can only be instantiated with a record type. Also, note that extensible types are defined recursively and can either have type variables or records types as base cases. To a type that appears as the base case of an extensible type we will call the base type of the extensible type (or just base type, if the context makes this clear). Also, if $\xtype$ is an extensible type, then $\xbase{\xtype}$ is its base type.
 
 Labels that appear in types and kinds must always be pairwise distinct and each label can only be assigned one type during its existence. Also, the order in which labels occur is insignificant. Finally, if two extensible types have the same base type and have the same kind restrictions, we will consider them equal. This will made precise in Section~\ref{sec:ti}, when we introduce a type reduction mechanism for extensible types.

\begin{example}
    \label{ex:equalityofexttype}
    The following extensible types are equal:
    \begin{align}
        \cntype{\extype{\vtype}{\glab_1}{\mtype_1}}{\glab_2}{\mtype_2} & \equiv \extype{\cntype{\vtype}{\glab_2}{\mtype_2}}{\glab_1}{\mtype_1} \\
        \cntype{\cntype{\extype{\vtype}{\glab_1}{\mtype_1}}{\glab_2}{\mtype_2}}{\glab_1}{\mtype_1} & \equiv \cntype{\cntype{\extype{\vtype}{\glab_1}{\mtype_1}}{\glab_1}{\mtype_1}}{\glab_2}{\mtype_2}
    \end{align}
    \begin{itemize}
        \item[(1)] Clearly, despite the order of their type extensions and contractions, both types have the same base type and tell us that records with type $\vtype$ must not have the field $\{\glab_1 : \mtype_1\}$ and must have the field $\{\glab_2 : \mtype_2\}$.
        \item[(2)] Here, we know that both types are equal because, despite the order of their type extensions and contractions, both have the same base type and tell us that records with type $\vtype$ must have the field $\{\glab_2 : \mtype_2\}$ and must not have the field $\{\glab_1 : \mtype_1\}$.
    \end{itemize}
    
    The two following extensible types are not equal:
    \begin{align}
        \cntype{\extype{\vtype_1}{\glab_1}{\mtype_1}}{\glab_2}{\mtype_2} & \not\equiv \extype{\cntype{\vtype_2}{\glab_1}{\mtype_1}}{\glab_2}{\mtype_2} \\
        \extype{\extype{\cntype{\vtype_1}{\glab_1}{\mtype_1}}{\glab_2}{\mtype_2}}{\glab_1}{\mtype_1} & \not\equiv \extype{\cntype{\extype{\vtype_2}{\glab_1}{\mtype_1}}{\glab_1}{\mtype_1}}{\glab_2}{\mtype_2}
    \end{align}
    \begin{itemize}
        \item[(3)] Clearly, these two types are not equal, and not just because they have syntactically different base types, but because the type on the left tells us that $\vtype_1$ must not have the field $\{\glab_1 : \mtype_1\}$ and must have the field $\{\glab_2 : \mtype_2\}$, while the type on the right tells us that $\vtype_2$ must have the field $\{\glab_1 : \mtype_1\}$ and must not have the field $\{\glab_2 : \mtype_2\}$.
        \item[(4)] Here, we also have two types that are not equal, because they must have two syntactically different base types. The reason is that the type on the left tells us that its base type must have the field $\{\glab_1 : \mtype_1\}$ and the type on the right tells us that its base type must not have that same field.
    \end{itemize}
\end{example}

Note that we can only ignore the order of type extensions and contractions with \emph{different labels} since changing the ordering between a type extensions and type contraction of a type will render it ``ill'' formed, as can be seen in Example~\ref{ex:equalityofexttype}. This property is enforced by the kinding rules in Definition~\ref{def:kindrules}.

\begin{definition}
    The set of free type variables of a type $\ttype$ or a kind $\gkind$ are denoted by $\ftv{\ttype}$ and $\ftv{\gkind}$, respectively. For second-order types, it is defined as $\ftv{\forall \vtype :: \gkind.\ttype} = \ftv{\gkind} \cup (\ftv{\ttype} \setminus \{\vtype\})$. \textit{FTV} for other types and kinds are defined as expected.
\end{definition}

We say that a type $\ttype$ is closed if $\ftv{\ttype} = \emptyset$. We assume that all bounded type variables are distinct and different from any free type variables, and that this property is preserved by substitution through $\alpha$-equivalence. The type construct $\forall \vtype :: \gkind.\ttype$ binds the type variable $\vtype$ in $\ttype$, but not in $\gkind$.

\begin{definition} 
We write $\{\vtype_1 :: \gkind_1, \dots, \vtype_n :: \gkind_n\}$ for the kind assignment that binds $\vtype_i$ to $\gkind_i$, $(1 \leq i \leq n)$ and $\emptyset$ for the empty kind assignment. We will also write $\kenv \{\vtype :: \gkind\}$ for $\kenv \cup \{\vtype :: \gkind\}$ provided that $\kenv$ is well formed, $\vtype \not\in \dom{\kenv}$, and $\ftv{\gkind} \subseteq \dom{\kenv}$. Note that $\kenv \{\vtype_1 :: \gkind_1, \dots, \vtype_n :: \gkind_n\}$ implies that $\vtype_i \not\in \ftv{\gkind_j}$ for any $1 \leq j < i \leq n$.
\end{definition}

Any type variables that appear in $\kenv$ must also be properly kinded by $\kenv$ itself.

\begin{definition}\label{def:wfkindassign}
        A kind assignment $\kenv$ is well formed if for all $\vtype \in \dom{\kenv}, \ftv{\kenv(\vtype)} \subseteq \dom{\kenv}$, where $\dom{f}$ denotes the domain of a function $f$.
\end{definition}

From now on, we assume that every kind assignment is well formed.

\begin{definition}\label{def:wftuk}
        A type $\ttype$ is well formed under a kind assignment $\kenv$ if $\ftv{\ttype} \subseteq \dom{\kenv}$.
\end{definition}

Definition~\ref{def:wftuk} is naturally extended to other syntactic constructs containing types except substitutions (see Definition~\ref{def:wfsubs}).

\subsection{Kind Restrictions}
\begin{definition}\label{def:kindrules}
        Type $\mtype$ has kind restriction $\gkind$, that we denote as $\kenv \Vdash \mtype :: \gkind$, if it can be derived from the following kinding rules:
        \begin{align*}
                i) \ \kenv & \Vdash \mtype :: \ukind \ \text{for any} \ \mtype \ \text{well formed under} \ \kenv \\
                ii) \ \kenv & \Vdash \{\lglab_1 : \lmtype_1, \dots, \lglab_n : \lmtype_n, \dots\} :: \rkind{\lglab_1 : \lmtype_1, \dots, \lglab_n : \lmtype_n}{\rglab_1 : \rmtype_1, \dots, \rglab_m : \rmtype_m} \\
                & \qquad \text{if} \ \{\lglab_1, \dots, \lglab_n, \dots\} \cap \{\rglab_1, \dots, \rglab_m\} = \emptyset, \\ 
                & \qquad \quad \text{both }\{\lglab_1 : \lmtype_1, \dots, \lglab_n : \lmtype_n, \dots\} 
                \ \ \text{and} \ \rmtype_i \ (1 \leq i \leq m) \ \text{are well formed under} \ \kenv \\
                iii) \ \kenv & \Vdash \vtype :: \rkind{\lglab_1 : \lmtype_1, \dots, \lglab_n : \lmtype_n}{\rglab_1 : \rmtype_1, \dots, \rglab_m : \rmtype_m} \\
                & \qquad \text{if} \ \kenv(\vtype) = \rkind{\lglab_1 : \lmtype_1, \dots, \lglab_n : \lmtype_n, \dots}{\rglab_1 : \rmtype_1, \dots, \rglab_m : \rmtype_m, \dots} \\
                iv) \ \kenv & \Vdash \extype{\xtype}{\glab}{\mtype} :: \rkind{\lglab_1 : \lmtype_1, \dots, \lglab_n : \lmtype_n, [\glab : \mtype]}{\rglab_1 : \rmtype_1, \dots, \rglab_m : \rmtype_m} \\
                & \qquad \text{if} \ \kenv \Vdash \xtype ::  \rkind{\lglab_1 : \lmtype_1, \dots, \lglab_n : \lmtype_n}{\rglab_1 : \rmtype_1, \dots, \rglab_m : \rmtype_m, \glab : \mtype} \\
                v) \ \kenv & \Vdash \cntype{\xtype}{\glab}{\mtype} :: \rkind{\lglab_1 : \lmtype_1, \dots, \lglab_n : \lmtype_n}{\rglab_1 : \rmtype_1, \dots, \rglab_m : \rmtype_m, [\glab : \mtype]} \\
                & \qquad \text{if} \ \kenv \Vdash \xtype ::  \rkind{\lglab_1 : \lmtype_1, \dots, \lglab_n : \lmtype_n, \glab : \mtype}{\rglab_1 : \rmtype_1, \dots, \rglab_m : \rmtype_m}
        \end{align*}
        where $[\glab : \tau]$ means that the inclusion of $\glab : \tau$ in its respective kind is optional.
\end{definition}

Note that if $\kenv \Vdash \mtype : \gkind$, then both $\gkind$ and $\mtype$ are well formed under $\kenv$.     

\begin{example}
    Consider the type $\mtype = \cntype{\extype{\{\glab_1 : \mtype_1\}}{\glab_2}{\mtype_2}}{\glab_1}{\mtype_1}$ and the empty kind assignment $\kenv = \emptyset$. These are some possible derivations of kind restrictions for $\mtype$ and its subtypes:
    \begin{align*}
        & \emptyset \Vdash \{\glab_1 : \mtype_1\} :: \rkind{}{} \\
        & \emptyset \Vdash \{\glab_1 : \mtype_1\} :: \rkind{\glab_1 : \mtype_1}{} \\
        & \emptyset \Vdash \extype{\{\glab_1 : \mtype_1\}}{\glab_2}{\mtype_2} :: \rkind{\glab_2 : \mtype_2}{} \\
        & \emptyset \Vdash \extype{\{\glab_1 : \mtype_1\}}{\glab_2}{\mtype_2} :: \rkind{}{\glab_3 : \mtype_3} \\
        & \emptyset \Vdash \cntype{\extype{\{\glab_1 : \mtype_1\}}{\glab_2}{\mtype_2}}{\glab_1}{\mtype_1} :: \rkind{\glab_2 : \mtype_2}{\glab_1 : \mtype_1}
    \end{align*}
    Note that we cannot derive any kind restrictions for $\mtype$ where $\{\glab_1 : \mtype_1\}$ appears on the left of the $\mmid$.
\end{example}

\begin{proposition}\label{prop:kindrules}
    The kinding rules ensure the two following properties:
    \begin{enumerate}
        \item If $\kenv \Vdash \mtype :: \rkind{\glab : \mtype', \dots}{\dots}$, then $\kenv \not\Vdash \mtype :: \rkind{\dots}{\glab : \mtype', \dots}$;
        \item If $\kenv \Vdash \mtype :: \rkind{\dots}{\glab : \mtype', \dots}$, then $\kenv \not\Vdash \mtype :: \rkind{\glab : \mtype', \dots}{\dots}$.
    \end{enumerate}
\end{proposition}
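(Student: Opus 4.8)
The plan is to prove both properties simultaneously by induction on the structure of the extensible type $\xtype$ appearing as $\mtype$, since the only kinding rules that mention record kinds of the form $\rkind{\dots}{\dots}$ are rules $(ii)$--$(v)$, which decompose exactly along the structure of an extensible type. (Rule $(i)$ is irrelevant here, as $\ukind$ contains no $\mmid$.) The key observation I would isolate first is that, in every kinding derivation, the set of labels appearing on the left of $\mmid$ and the set appearing on the right of $\mmid$ are forced to be disjoint: rule $(ii)$ states this disjointness explicitly as a side condition, and rules $(iv)$ and $(v)$ move a single label $\glab$ from one side to the other in the premise while preserving the invariant. So I would phrase the two properties as a single disjointness invariant: if $\kenv \Vdash \mtype :: \rkind{\lglab_1 : \lmtype_1, \dots}{\rglab_1 : \rmtype_1, \dots}$ is derivable, then $\{\lglab_1, \dots\} \cap \{\rglab_1, \dots\} = \emptyset$. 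Properties (1) and (2) are then immediate contrapositives of this invariant.

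First I would handle the base cases. For $\mtype = \{\glab_1 : \mtype_1, \dots\}$, the only applicable rule is $(ii)$, whose side condition $\{\lglab_1, \dots, \lglab_n, \dots\} \cap \{\rglab_1, \dots, \rglab_m\} = \emptyset$ gives the invariant directly. For $\mtype = \vtype$ a type variable, the only applicable rule is $(iii)$, which reads off the kind from $\kenv(\vtype)$; here I would invoke well-formedness of $\kenv$ together with the standing convention, stated in the paragraph following Definition 2.3, that labels appearing in any kind are pairwise distinct and a label carries a unique type, so that $\kenv(\vtype)$ itself already satisfies disjointness and rule $(iii)$ merely selects sub-multisets of each side.

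For the inductive step, consider $\mtype = \extype{\xtype}{\glab}{\mtype'}$. Any derivation must end with rule $(iv)$, so the conclusion places $\glab : \mtype'$ on the left (possibly), and the premise is $\kenv \Vdash \xtype :: \rkind{\lglab_1 : \lmtype_1, \dots}{\rglab_1 : \rmtype_1, \dots, \glab : \mtype'}$, which by the induction hypothesis has disjoint label sets. The crucial point, which is the step I expect to be the main obstacle, is to argue that $\glab$ cannot \emph{also} appear among the $\rglab_j$ in the conclusion: the premise already records $\glab$ on the right, and by the induction hypothesis $\glab$ therefore does not occur on the left of the premise; since rule $(iv)$ adds $\glab$ to the left and leaves the right otherwise unchanged except for removing $\glab$, the disjointness is preserved exactly. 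The symmetric case $\mtype = \cntype{\xtype}{\glab}{\mtype'}$ via rule $(v)$ is handled dually. The delicate bookkeeping is that the premises of $(iv)$ and $(v)$ keep the moved label on the opposite side, so I must track carefully that no derivation can smuggle $\glab$ onto both sides; this is precisely where the invariant, rather than a weaker per-rule check, does the work, and I would make sure the induction hypothesis is strong enough (the full disjointness statement, not just the negation for a single label) to close this case cleanly.
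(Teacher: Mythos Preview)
Your proposed invariant is too weak to yield the proposition. You phrase it as: ``if $\kenv \Vdash \mtype :: \rkind{\lglab_1:\lmtype_1,\dots}{\rglab_1:\rmtype_1,\dots}$ is derivable, then $\{\lglab_1,\dots\}\cap\{\rglab_1,\dots\}=\emptyset$,'' and claim that Properties (1) and (2) are ``immediate contrapositives'' of this. But they are not: your invariant concerns the two halves of a \emph{single} kind, whereas the proposition compares \emph{two different} kinding judgements for the same $\mtype$. Concretely, nothing in your invariant rules out the simultaneous derivability of $\kenv\Vdash\mtype::\rkind{\glab:\mtype'}{}$ and $\kenv\Vdash\mtype::\rkind{}{\glab:\mtype'}$, since each of these individually has disjoint left/right label sets. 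Worse, your invariant is already a triviality in this setting: the paper stipulates after Definition~2.2 that labels occurring in any kind are pairwise distinct, so the left/right disjointness within one kind holds by syntactic convention and needs no induction at all.

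What you actually need is an invariant quantified over \emph{all} derivable record kinds of $\mtype$: define $\mathrm{Left}_\kenv(\mtype)=\{\glab\mid \kenv\Vdash\mtype::\rkind{\glab:\mtype',\dots}{\dots}\text{ for some }\mtype'\}$ and $\mathrm{Right}_\kenv(\mtype)$ symmetrically, and prove $\mathrm{Left}_\kenv(\mtype)\cap\mathrm{Right}_\kenv(\mtype)=\emptyset$ by induction on $\mtype$. The base cases are as you describe (rule $(ii)$'s side condition for record types; the pairwise-distinct-labels convention on $\kenv(\vtype)$ for variables). For $\mtype=\extype{\xtype}{\glab}{\mtype'}$, rule $(iv)$ gives $\mathrm{Left}_\kenv(\mtype)\subseteq\mathrm{Left}_\kenv(\xtype)\cup\{\glab\}$ and $\mathrm{Right}_\kenv(\mtype)\subseteq\mathrm{Right}_\kenv(\xtype)\setminus\{\glab\}$ (the latter because the premise forces $\glab$ onto the right of $\xtype$'s kind, so $\glab$ cannot also sit among the $\rglab_j$ of the conclusion by the distinct-labels convention), and the induction hypothesis then closes the case; rule $(v)$ is dual. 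The paper itself states the proposition without proof, so there is no authorial argument to compare against, but the structural induction you outlined is the right skeleton once you strengthen the invariant to range over all derivations rather than one.
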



\subsection{Kinded Substitutions}
\begin{definition}\label{def:typesubs}
        A type substitution is a function from a finite set of type variables to types. We write $[\ttype_1/\vtype_1, \dots, \ttype_n/\vtype_n]$ for the substitution that maps each $\vtype_i$ to $\ttype_i$. A substitution $\subs$ is extended to the set of all type variables by letting $\subs(\vtype) = \vtype$ for all $\vtype \not\in \dom{\subs}$, where $\dom{f}$ denotes the domain of a function $f$, and is extended uniquely to polytypes, record types, function types and extensible types as follows:
        \begin{align*}
                \subs(\forall\vtype :: \gkind.\ttype) & = \forall\vtype :: \subs(\gkind).\subs(\ttype) \\
                \subs(\{\glab_1 : \mtype_1, \dots, \glab_n : \mtype_n\}) & =  \{\glab_1 : \subs(\mtype_1), \dots, \glab_n : \subs(\mtype_n)\}) \\
                \subs(\atype{\mtype_1}{\mtype_2})                        & = \atype{\subs(\mtype_1)}{\subs(\mtype_2)}                          \\
                \subs(\cntype{\xtype}{\glab}{\mtype})                    & = \cntype{\subs(\xtype)}{\glab}{\subs(\mtype)}                      \\
                \subs(\extype{\xtype}{\glab}{\mtype})                    & = \extype{\subs(\xtype)}{\glab}{\subs(\mtype)}\
        \end{align*}
\end{definition}

\begin{definition}\label{def:wfsubs}
        A substitution $\subs$ is well formed under a kind assignment $\kenv$ if for any $\vtype \in \dom{\subs}$, $\subs(\vtype)$ is well formed under $\kenv$.
\end{definition}

Since all type variables are kinded by a kind assignment, the conventional notion of substitution (see Definition~\ref{def:typesubs}) must be refined by also taking into consideration kind constraints.

\begin{definition}
        A kinded substitution is a pair $(\kenv, \subs)$ of a kind assignment $\kenv$ and a substitution $\subs$ that is well formed under $\kenv$. A kinded substitution $(\kenv, \subs)$ is ground if $\kenv = \emptyset$. We will write $\subs$ for a ground kinded substitution $(\emptyset, \subs)$.
\end{definition}

\begin{definition}\label{def:ksuskenv}
        A kinded substitution $(\kenv_1, \subs)$ respects a kind assignment $\kenv_2$ if for any $\vtype \in \dom{\kenv_2}$, $\kenv_1 \Vdash \subs(\vtype) :: \subs(\kenv_2(\vtype))$.
\end{definition}

This notion specifies the condition under which a substitution can be applied, i.e., if $(\kenv_1, \subs)$ respects $\kenv$ then it can be applied to a type $\ttype$ kinded by $\kenv$, yielding a type $\subs(\ttype)$ kinded by $\kenv_1$. 

\begin{lemma}\label{lem:kindsubs}
    If $\kenv \Vdash \mtype :: \gkind$ and a kinded substitution $(\kenv_1, \subs)$ respects $\kenv$, then $\kenv_1 \Vdash \subs(\mtype) :: \subs(\gkind)$.
\end{lemma}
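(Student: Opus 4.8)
The plan is to prove the statement by induction on the derivation of $\kenv \Vdash \mtype :: \gkind$, according to the five kinding rules of Definition~\ref{def:kindrules}. Two routine facts are used throughout. First, $\subs$ commutes with all the type and kind constructors and leaves every label untouched; in particular $\subs(\rkind{L}{R})$ is obtained by applying $\subs$ to the field types of $L$ and $R$ while keeping the labels fixed, so the disjointness side conditions are preserved. Second, substitution preserves well-formedness: if $\mtype$ is well formed under $\kenv$ then $\subs(\mtype)$ is well formed under $\kenv_1$. This follows from $(\kenv_1,\subs)$ respecting $\kenv$: every $\beta \in \ftv{\mtype} \subseteq \dom{\kenv}$ satisfies $\kenv_1 \Vdash \subs(\beta) :: \subs(\kenv(\beta))$, so by the remark following Definition~\ref{def:kindrules} each $\subs(\beta)$ is well formed under $\kenv_1$, whence $\ftv{\subs(\mtype)} = \bigcup_{\beta \in \ftv{\mtype}} \ftv{\subs(\beta)} \subseteq \dom{\kenv_1}$.

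Before the main induction I would isolate a \emph{kind-weakening} lemma: if $\kenv_1 \Vdash \mtype' :: \rkind{L}{R}$ and $L' \subseteq L$, $R' \subseteq R$ as sets of labelled fields, then $\kenv_1 \Vdash \mtype' :: \rkind{L'}{R'}$. This is proved by induction on its own derivation: in rules (ii) and (iii) the conclusion already allows listing any subset of the available fields, so the smaller kind is derived directly; in rules (iv) and (v) one pushes the weakening through the single premise by the induction hypothesis and reapplies the same rule, choosing whether to retain the optional field $[\glab:\mtype]$ according to whether it lies in $L'$ (resp.\ $R'$). This lemma is exactly what reconciles the full kind supplied by the respects condition with the possibly smaller kind recorded in a given derivation.

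With these in hand the main induction is direct in most cases. For rule (i), $\gkind = \ukind$, so $\subs(\gkind) = \ukind$ and we conclude by rule (i) together with well-formedness preservation. For rule (ii), $\subs(\mtype)$ is again a record type whose fields include $\subs(L)$ and whose labels are disjoint from those of $\subs(R)$, and each $\subs(\rmtype_i)$ is well formed under $\kenv_1$, so rule (ii) reapplies. Rules (iv) and (v) follow from the induction hypothesis on the unique premise and a reapplication of the same rule, since $\subs$ commutes with the type-extension and type-contraction constructors and with the kind constructor, and the optional field is preserved on the correct side of $\mmid$.

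The main obstacle is the variable case, rule (iii), where $\mtype = \vtype$ and $\gkind = \rkind{L}{R}$ with $\kenv(\vtype) = \rkind{L,\dots}{R,\dots}$ a kind whose field sets contain those of $\gkind$. Here there is no smaller subderivation to invoke, so the conclusion must come from the respects hypothesis (Definition~\ref{def:ksuskenv}), which gives only $\kenv_1 \Vdash \subs(\vtype) :: \subs(\kenv(\vtype))$, that is, the type $\subs(\vtype)$ — now an arbitrary extensible type rather than a variable — at the \emph{full} substituted kind. Since $\subs(\gkind) = \rkind{\subs(L)}{\subs(R)}$ has field sets contained in those of $\subs(\kenv(\vtype))$, the kind-weakening lemma yields $\kenv_1 \Vdash \subs(\vtype) :: \subs(\gkind)$, closing the case. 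The care required is precisely in this interaction between the subset structure built into rule (iii) and the full-kind formulation of Definition~\ref{def:ksuskenv}.
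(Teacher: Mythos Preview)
The paper states Lemma~\ref{lem:kindsubs} without proof, so there is no reference argument to compare against. Your approach is correct and is the natural one: induction on the kinding derivation, with an auxiliary kind-weakening lemma to bridge the gap in the variable case between the full kind $\subs(\kenv(\vtype))$ supplied by Definition~\ref{def:ksuskenv} and the possibly smaller kind $\subs(\gkind)$ that rule~(iii) actually records. The weakening lemma itself goes through as you describe, and the remaining cases are routine once one observes that $\subs$ fixes labels, commutes with the extensible-type constructors, and preserves well-formedness under $\kenv_1$. I would perhaps only note explicitly in cases~(iv) and~(v) that the induction hypothesis, giving $\kenv_1 \Vdash \subs(\xtype) :: \subs(\rkind{\dots}{\dots,\glab:\mtype})$, forces $\subs(\xtype)$ to be an extensible type (only rules~(ii)--(v) yield record kinds), so that rule~(iv) or~(v) may indeed be reapplied to $\extype{\subs(\xtype)}{\glab}{\subs(\mtype)}$ or its contraction analogue.
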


\begin{corollary}\label{cor:kindsubs}
    If $(\kenv_1, \subs_1)$ respects $\kenv$ and $(\kenv_2, \subs_2)$ respects $\kenv_1$, then $(\kenv_2, \subs_2 \circ \subs_1)$ respects $\kenv$.
\end{corollary}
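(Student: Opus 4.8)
The plan is to obtain the corollary as a direct consequence of Lemma~\ref{lem:kindsubs}, reading it essentially as the statement that the ``respects'' relation is closed under composition of the underlying substitutions. Unwinding Definition~\ref{def:ksuskenv}, what has to be shown is that for every $\vtype \in \dom{\kenv}$ one has $\kenv_2 \Vdash (\subs_2 \circ \subs_1)(\vtype) :: (\subs_2 \circ \subs_1)(\kenv(\vtype))$, together with the implicit side condition that the pair $(\kenv_2, \subs_2 \circ \subs_1)$ is a well-formed kinded substitution.

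First I would fix an arbitrary $\vtype \in \dom{\kenv}$ and put $\mtype := \subs_1(\vtype)$ and $\gkind := \subs_1(\kenv(\vtype))$ (the latter is again a kind, since substitution extends to kinds). Because $(\kenv_1, \subs_1)$ respects $\kenv$, Definition~\ref{def:ksuskenv} delivers the judgement $\kenv_1 \Vdash \mtype :: \gkind$. I then apply Lemma~\ref{lem:kindsubs} to this judgement with the kinded substitution $(\kenv_2, \subs_2)$, which respects $\kenv_1$ by hypothesis; the lemma yields $\kenv_2 \Vdash \subs_2(\mtype) :: \subs_2(\gkind)$. Since composition acts as $\subs_2(\subs_1(\cdot)) = (\subs_2 \circ \subs_1)(\cdot)$ on both the type and the kind, this is precisely $\kenv_2 \Vdash (\subs_2 \circ \subs_1)(\vtype) :: (\subs_2 \circ \subs_1)(\kenv(\vtype))$. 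As $\vtype$ was arbitrary, the respects condition holds throughout $\dom{\kenv}$.

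Second, I would discharge the side condition that $(\kenv_2, \subs_2 \circ \subs_1)$ is a kinded substitution, i.e.\ that $\subs_2 \circ \subs_1$ is well formed under $\kenv_2$ (Definition~\ref{def:wfsubs}). For each $\vtype \in \dom{\subs_2 \circ \subs_1} \subseteq \dom{\subs_1} \cup \dom{\subs_2}$ one checks $\ftv{(\subs_2 \circ \subs_1)(\vtype)} \subseteq \dom{\kenv_2}$. When $\vtype$ is acted on nontrivially by $\subs_2$ alone this is immediate from $(\kenv_2, \subs_2)$ being well formed under $\kenv_2$. For $\vtype \in \dom{\subs_1}$ one uses that $\subs_1(\vtype)$ is well formed under $\kenv_1$, so its free variables lie in $\dom{\kenv_1}$, and then observes, via the remark following Definition~\ref{def:kindrules} that a derivable kinding forces well-formedness of its subject, that $\subs_2$ maps each such variable to a type whose free variables lie in $\dom{\kenv_2}$.

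I do not anticipate a genuine obstacle: this is a routine ``functoriality'' statement and the substantive work is already carried by Lemma~\ref{lem:kindsubs}. The only point needing a little care is the bookkeeping of the side condition above, and specifically making sure that free variables of $\subs_1(\vtype)$ lying outside $\dom{\subs_2}$ are still covered; this is exactly where one invokes that $(\kenv_2, \subs_2)$ respects $\kenv_1$ to guarantee that every variable of $\dom{\kenv_1}$ is sent into $\dom{\kenv_2}$. Once that is in place, the conclusion follows immediately.
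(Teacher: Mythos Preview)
Your proof is correct and follows essentially the same route as the paper: unfold the ``respects'' condition for $(\kenv_1,\subs_1)$ on $\kenv$, then push $\subs_2$ through using Lemma~\ref{lem:kindsubs} with the hypothesis that $(\kenv_2,\subs_2)$ respects $\kenv_1$. You are in fact more careful than the paper, which invokes Lemma~\ref{lem:kindsubs} only implicitly and does not spell out the well-formedness of $\subs_2\circ\subs_1$ under $\kenv_2$.
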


\begin{proof}
    Since $(\kenv_1, \subs_1)$ respects $\kenv$, we know that $\forall\vtype \in \dom{\kenv}, \kenv_1 \Vdash \subs_1(\vtype) :: \subs_1(\kenv(\vtype))$. Since $(\kenv_2, \subs_2)$ respects $\kenv_1$, we also know that $\forall\vtype \in \dom{\kenv_1}, \kenv_2 \Vdash \subs_2(\vtype) :: \subs_2(\kenv_1(\vtype))$. But, since we know that $(\kenv_2, \subs_2)$ respects $\kenv_1$, we have that $\kenv_2 \Vdash \subs_2 \circ \subs_1(\vtype) :: \subs_2 \circ \subs_1(\kenv(\vtype))$. Therefore, $\forall \vtype \in \dom{\kenv}, \kenv_2 \Vdash \subs_2 \circ \subs_1(\vtype) :: \subs_2 \circ \subs_1(\kenv(\vtype))$ and $(\kenv_2, \subs_2 \circ \subs_1)$ respects $\kenv$.
\end{proof}


A kind assignment is a constraint on possible substitutions of type variables. Furthermore, since types may depend on type variables other than their own free type variables, we need to take into consideration kind assignments and extend the notion of free type variables of a type.

\begin{definition}\label{def:eftv}
        For a type $\ttype$ well formed under $\kenv$, the set of essentially-free type variables of $\ttype$ under $\kenv$, denoted $\eftv{\kenv}{\ttype}$, is the smallest set satisfying: 
        \begin{enumerate}
            \item [$(i)$] $\ftv{\ttype} \subseteq \eftv{\kenv}{\ttype}$. 
            \item [$(ii)$] $\mathit{If\ \vtype \in \eftv{K}{\ttype}, then\ \ftv{K(\vtype)} \subseteq \eftv{K}{\ttype}}$.
        \end{enumerate}
        %
\end{definition}


\section{Type System}
\label{sec:ta}
Before going into the typing rules, we first need to define what are generic instance of monotypes and closures of polytypes. To define the closure of a type, we first need to define what is a type assignment.

\begin{definition}
We write $\{\vterm_1 : \ttype_1, \dots, \vterm_n : \ttype_n\}$ for the type assignment that binds $\vterm_i$ to $\ttype_i$, $(1 \leq i \leq n)$ and $\emptyset$ for the empty type assignment. We will also write $\tenv \{\vterm :   \ttype\}$ for $\tenv \cup \{\vterm : \ttype\}$ provided that $\vterm \not\in \dom{\tenv}$.
\end{definition}

\begin{definition}
        We say that a type assignment $\tenv$ is well formed under a kind assignment $\kenv$, if $\forall \vterm \in \dom{\tenv}$, $\tenv(\vterm)$ is well formed under $\kenv$.
\end{definition}

\begin{definition}\label{def:typecls}
        Let $\tenv$ and $\mtype$ be well formed under $\kenv$. The closure of $\mtype$ under $\tenv$, $\kenv$, denoted by $\cls{\kenv}{\tenv}{\mtype}$, is a pair $(\kenv', \forall \vtype_1 :: \gkind_1 \cdots \forall \vtype_n :: \gkind_n.\mtype)$ such that $\kenv' \{\vtype_1 :: \gkind_1, \dots, \vtype_n :: \gkind_n \} = \kenv$ and $\{\vtype_1, \dots, \vtype_n\} = \eftv{\kenv}{\mtype} \setminus \eftv{\kenv}{\tenv}$.
\end{definition}

\begin{definition}\label{def:typeinst}
        Let $\ttype_1$ be a polytype well formed under $\kenv$. We say that $\ttype_2$ is a generic instance of $\ttype_1$ under $\kenv$, written $\kenv \Vdash \ttype_1 \geq \ttype_2$, if $\ttype_1 = \forall \vtype^{1}_{1} :: \gkind^{1}_{1} \cdots \vtype^{1}_{n} :: \gkind^{1}_{n}.\mtype_1$, $\ttype_2 = \forall \vtype^{2}_{1} :: \gkind^{2}_{1} \cdots \vtype^{2}_{1} :: \gkind^{2}_{m}.\mtype_2$, and there is a substitution $\subs$ such that $\dom{\subs} = \{\vtype^{1}_{1}, \dots, \vtype^{1}_{n}\}$, $(\kenv \{\vtype^{2}_{1} :: \gkind^{2}_{1}, \dots, \vtype^{2}_{m} :: \gkind^{2}_{m}\}, \subs)$ respects $\kenv \{\vtype^{1}_{1} :: \gkind^{1}_{1}, \dots, \vtype^{1}_{n} :: \gkind^{1}_{n}\}$ and $\mtype_2 = \subs(\mtype_1)$.
\end{definition}

\begin{lemma}\label{lem:transinst}
    If $\kenv \Vdash \ttype_1 \geq \ttype_2$ and $\kenv \Vdash \ttype_2 \geq \ttype_3$, then $\kenv \Vdash \ttype_1 \geq \ttype_3$.
\end{lemma}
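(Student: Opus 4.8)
The plan is to peel both generic-instance hypotheses apart using Definition~\ref{def:typeinst}, compose the two witnessing substitutions, and then discharge the three conditions of Definition~\ref{def:typeinst} for the composite, with the ``respects'' condition handed to us almost verbatim by Corollary~\ref{cor:kindsubs}.

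First I would unpack the hypotheses. From $\kenv \Vdash \ttype_1 \geq \ttype_2$ I obtain $\ttype_1 = \forall \vtype^1_1 :: \gkind^1_1 \cdots \vtype^1_n :: \gkind^1_n.\mtype_1$, $\ttype_2 = \forall \vtype^2_1 :: \gkind^2_1 \cdots \vtype^2_m :: \gkind^2_m.\mtype_2$, and a substitution $\subs_1$ with $\dom{\subs_1} = \{\vtype^1_1, \dots, \vtype^1_n\}$ such that $(\kenv\{\vtype^2_1 :: \gkind^2_1, \dots, \vtype^2_m :: \gkind^2_m\}, \subs_1)$ respects $\kenv\{\vtype^1_1 :: \gkind^1_1, \dots, \vtype^1_n :: \gkind^1_n\}$ and $\mtype_2 = \subs_1(\mtype_1)$. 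Symmetrically, from $\kenv \Vdash \ttype_2 \geq \ttype_3$ I obtain $\ttype_3 = \forall \vtype^3_1 :: \gkind^3_1 \cdots \vtype^3_p :: \gkind^3_p.\mtype_3$ and a substitution $\subs_2$ with $\dom{\subs_2} = \{\vtype^2_1, \dots, \vtype^2_m\}$ such that $(\kenv\{\vtype^3_1 :: \gkind^3_1, \dots, \vtype^3_p :: \gkind^3_p\}, \subs_2)$ respects $\kenv\{\vtype^2_1 :: \gkind^2_1, \dots, \vtype^2_m :: \gkind^2_m\}$ and $\mtype_3 = \subs_2(\mtype_2)$. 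Writing $K_1, K_2, K_3$ for the three extended kind assignments, the two facts read $(K_2, \subs_1)$ respects $K_1$ and $(K_3, \subs_2)$ respects $K_2$.

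The candidate witness for $\kenv \Vdash \ttype_1 \geq \ttype_3$ is $\subs = \subs_2 \circ \subs_1$. Corollary~\ref{cor:kindsubs}, applied to the two respecting facts above, yields immediately that $(K_3, \subs_2 \circ \subs_1)$ respects $K_1$, which is precisely the kinding condition demanded of $\ttype_1 \geq \ttype_3$ with target context $K_3$. The monotype condition is then a one-line computation, $\subs(\mtype_1) = \subs_2(\subs_1(\mtype_1)) = \subs_2(\mtype_2) = \mtype_3$.

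The step needing the most care — and the only genuine obstacle — will be the domain condition $\dom{\subs} = \{\vtype^1_1, \dots, \vtype^1_n\}$, since the raw composite $\subs_2 \circ \subs_1$ may also move the variables $\vtype^2_j$ that $\subs_1$ introduces and $\subs_2$ subsequently rewrites. I would therefore take $\subs$ to be the restriction of $\subs_2 \circ \subs_1$ to $\{\vtype^1_1, \dots, \vtype^1_n\}$ and argue that this restriction is harmless. By the convention that bound variables are pairwise distinct and disjoint from the free variables (preserved by $\alpha$-equivalence), together with the side conditions of the extension notation $\kenv\{\vtype :: \gkind\}$ (which force $\vtype^1_i \notin \dom{\kenv}$ and $\vtype^2_j \notin \dom{\kenv}$), the composite is the identity on $\dom{\kenv}$ and agrees with $\subs$ on $\{\vtype^1_i\}$; hence $\subs$ and $\subs_2 \circ \subs_1$ coincide on the whole domain $\dom{\kenv} \cup \{\vtype^1_1, \dots, \vtype^1_n\}$ of $K_1$, and so on every type variable occurring in the kinds $\gkind^1_i$ and in $\mtype_1$, all of which are well formed under $K_1$. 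Consequently the restriction disturbs neither the ``respects'' conclusion inherited from Corollary~\ref{cor:kindsubs} nor the equation $\subs(\mtype_1) = \mtype_3$, so all three conditions of Definition~\ref{def:typeinst} hold and $\kenv \Vdash \ttype_1 \geq \ttype_3$ follows.
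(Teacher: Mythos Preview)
Your proof is correct and follows essentially the same route as the paper: unpack both hypotheses, set the witness to $\subs_2 \circ \subs_1$, invoke Corollary~\ref{cor:kindsubs} for the ``respects'' condition, and check $\subs_2(\subs_1(\mtype_1)) = \mtype_3$. If anything you are more careful than the paper, which simply asserts $\dom{\subs_2 \circ \subs_1} = \dom{\subs_1}$ without the restriction-and-coincidence argument you supply.
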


\begin{proof}
    Without loss of generality, let us assume that $\ttype_1 = \forall \vtype^{1}_{1} :: \gkind^{1}_{1} \cdots \vtype^{1}_{n} :: \gkind^{1}_{n}.\mtype_1$, $\ttype_2 = \forall \vtype^{2}_{1} :: \gkind^{2}_{1} \cdots \vtype^{2}_{m} :: \gkind^{1}_{m}.\mtype_2$, and $\ttype_3 = \forall \vtype^{3}_{1} :: \gkind^{3}_{1} \cdots \vtype^{3}_{k} :: \gkind^{3}_{k}.\mtype_3$. Since $\kenv \Vdash \ttype_1 \geq \ttype_2$, we know that there exists a substitution $\subs_1$ such that $\dom{\subs_1} = \{\vtype^1_{1}, \dots, \vtype^1_{n}\}$, $(\kenv\{\vtype^2_{1} :: \gkind^2_{1}, \dots, \vtype^2_{m} :: \gkind^2_{m}\}, \subs_1)$ respects $\kenv\{\vtype^1_{1} :: \gkind^1_{n} :: \gkind^1_{n}\}$ and $\mtype_2 = \subs_1(\mtype_1)$. Since $\kenv \Vdash \ttype_2 \geq \ttype_3$, we know that there exists a substitution $\subs_2$ such that $\dom{\subs_2} = \{\vtype^2_{1}, \dots, \vtype^2_{m}\}$, $(\kenv\{\vtype^3_{1} :: \gkind^3_{1}, \dots, \vtype^3_{k} :: \gkind^3_{k}\}, \subs_2)$ respects $\kenv\{\vtype^2_{1} :: \gkind^2_{n} :: \gkind^2_{m}\}$ and $\mtype_3 = \subs_2(\mtype_2)$. To show that $\kenv \Vdash \ttype_1 \geq \ttype_3$, we just need to find a substitution $\subs_3$, such that $\dom{\subs_3} = \{\vtype^1_{1}, \dots, \vtype^1_{n}\}$, $(\kenv\{\vtype^3_{1} :: \gkind^3_{1}, \dots, \vtype^3_{k} :: \gkind^3_{k}\}, \subs_3)$ respects $\kenv\{\vtype^1_{1} :: \gkind^1_{n} :: \gkind^1_{n}\}$ and $\mtype_3 = \subs_3(\mtype_1)$. If we choose $\subs_3 = \subs_2 \circ \subs_1$, then $\dom{\subs_3} = \dom{\subs_1} = \{\vtype^1_{1}, \dots, \vtype^1_{n}\}$, by Corollary~\ref{cor:kindsubs}, $(\kenv\{\vtype^3_{1} :: \gkind^3_{1}, \dots, \vtype^3_{k} :: \gkind^3_{k}\}, \subs_2 \circ \subs_1)$ respects $\kenv$, and $\mtype_3 = \subs_3(\mtype_1) = \subs_2 \circ \subs_1(\mtype_1) = \subs_2(\subs_1(\mtype_1)) = \subs_2(\mtype_2)$.
\end{proof}

The type assignment system is given in Figure~\ref{fig:typesystem}. We use $\kenv,\tenv \vdash \gterm: \sigma$ to denote that term $\gterm$ has type $\sigma$ given the type and kind assignments $\tenv$ and $\kenv$, respectively.

\begin{figure}
        \begin{prooftree}
                \AxiomC{$\tenv \ \text{is well formed under} \ \kenv$}
                \AxiomC{$\kenv \Vdash \tenv(\vterm) \geq \mtype$}
                \RightLabel{(Var)}
                \BinaryInfC{$\kenv, \tenv \vdash \vterm : \mtype$}
        \end{prooftree}
        \begin{prooftree}
                \AxiomC{$\tenv \ \text{is well formed under} \ \kenv$}
                \RightLabel{(Const)}
                \UnaryInfC{$\kenv, \tenv \vdash \cterm^\btype : \btype$}
        \end{prooftree}
        \begin{prooftree}
                \AxiomC{$\kenv, \tenv \{\vterm : \mtype_1\} \vdash \gterm : \mtype_2$}
                \RightLabel{(Abs)}
                \UnaryInfC{$\kenv, \tenv \vdash \abs{\vterm}{\gterm} : \atype{\mtype_1}{\mtype_2}$}
        \end{prooftree}
        \begin{prooftree}
                \AxiomC{$\kenv, \tenv \vdash \gterm_1 : \atype{\mtype_1}{\mtype_2}$}
                \AxiomC{$\kenv, \tenv \vdash \gterm_2 : \mtype_1$}
                \RightLabel{(App)}
                \BinaryInfC{$\kenv, \tenv \vdash \app{\gterm_1}{\gterm_2} : \mtype_2$}
        \end{prooftree}
        \begin{prooftree}
                \AxiomC{$\kenv, \tenv \vdash \gterm_1 : \ttype$}
                \AxiomC{$\kenv, \tenv \{\vterm : \ttype\} \vdash \gterm_2 : \mtype$}
                \RightLabel{(Let)}
                \BinaryInfC{$\kenv, \tenv \vdash \letin{\vterm}{\gterm_1}{\gterm_2} : \mtype$}
        \end{prooftree}
        \begin{prooftree}
                \AxiomC{$\kenv, \tenv \vdash \gterm_i : \mtype_i \ (1 \le i \le n)$}
                \RightLabel{(Rec)}
                \UnaryInfC{$\kenv, \tenv \vdash \{l_1 = \gterm_1, \dots, l_n = \gterm_n\} : \{l_1 : \mtype_1, \dots, l_n : \mtype_n\}$}
        \end{prooftree}
        \begin{prooftree}
                \AxiomC{$\kenv, \tenv \vdash \gterm : \mtype_1$}
                \AxiomC{$\kenv \Vdash \mtype_1 :: \rkind{\glab : \mtype_2}{}$}
                \RightLabel{(Sel)}
                \BinaryInfC{$\kenv, \tenv \vdash \sel{M}{\glab} : \mtype_2$}
        \end{prooftree}
        \begin{prooftree}
                \AxiomC{$\kenv, \tenv \vdash \gterm_1 : \mtype_1$}
                \AxiomC{$\kenv, \tenv \vdash \gterm_2 : \mtype_2$}
                \AxiomC{$\kenv \Vdash \mtype_1 :: \rkind{\glab : \mtype_2}{}$}
                \RightLabel{(Modif)}
                \TrinaryInfC{$\kenv, \tenv \vdash \modif{\gterm_1}{\glab}{\gterm_2} : \mtype_1$}
        \end{prooftree}
        \begin{prooftree}
                \AxiomC{$\kenv, \tenv \vdash \gterm : \mtype$}
                \AxiomC{$\cls{\kenv}{\tenv}{\mtype} = (\kenv', \ttype)$}
                \RightLabel{(Gen)}
                \BinaryInfC{$\kenv', \tenv \vdash \gterm : \ttype$}
        \end{prooftree}
        \begin{prooftree}
                \AxiomC{$\kenv, \tenv \vdash \gterm : \mtype_1$}
                \AxiomC{$\kenv \Vdash \mtype_1 :: \rkind{\glab : \mtype_2}{}$}
                \RightLabel{(Contr)}
                \BinaryInfC{$\kenv, \tenv \vdash \cnt{\gterm}{\glab} : \cntype{\mtype_1}{\glab}{\mtype_2}$}
        \end{prooftree}
        \begin{prooftree}
                \AxiomC{$\kenv, \tenv \vdash \gterm_1 : \mtype_1$}
                \AxiomC{$\kenv, \tenv \vdash \gterm_2 : \mtype_2$}
                \AxiomC{$\kenv \Vdash \mtype_1 :: \rkind{}{\glab : \mtype_2}$}
                \AxiomC{$\xbase{\mtype_1} \not\in \ftv{\mtype_2}$}
                \RightLabel{(Ext)}
                \QuaternaryInfC{$\kenv, \tenv \vdash \ext{\gterm_1}{\glab}{\gterm_2} : \extype{\mtype_1}{\glab}{\mtype_2}$}
        \end{prooftree}
        \caption{Typing rules for ML-style record calculus with extensible records}
\label{fig:typesystem}
\end{figure}

\begin{example}
    Let $\kenv = \{\vtype_1 :: \rkind{}{\glab : \vtype_2}, \vtype_2 :: \ukind\}$ and $\tenv = \{\vterm : \vtype_1, y : \vtype_2\}$. Then we can construct the following type derivation for $\sel{\ext{\vterm}{\glab}{y}}{\glab}$:
    \begin{prooftree}
        \AxiomC{$\kenv \Vdash \tenv(\vterm) \geq \vtype_1$}
        \RightLabel{(Var)}
        \UnaryInfC{$\kenv, \tenv \vdash \vterm : \vtype_1$}
        \AxiomC{$\kenv \Vdash \tenv(y) \geq \vtype_2$}
        \RightLabel{(Var)}
        \UnaryInfC{$\kenv, \tenv \vdash y : \vtype_2$}
        \AxiomC{$\kenv \Vdash \vtype_1 :: \rkind{}{\glab : \vtype_2}$}
        \AxiomC{$\vtype_1 \not\in \ftv{\vtype_2}$}
        \RightLabel{(Ext)}
        \LeftLabel{$\Delta =$}
        \QuaternaryInfC{$\kenv, \tenv \vdash \ext{\vterm}{\glab}{y} : \extype{\vtype_1}{\glab}{\vtype_2}$}
    \end{prooftree}
    \begin{prooftree}
        \AxiomC{$\Delta$}
        \AxiomC{$\kenv \Vdash \extype{\vtype_1}{\glab}{\vtype_2} :: \rkind{\glab : \vtype_2}{}$}
        \RightLabel{(Sel)}
        \BinaryInfC{$\kenv, \tenv \vdash \sel{\ext{\vterm}{\glab}{y}}{\glab} : \vtype_2$}        
    \end{prooftree}
\end{example}

The following lemma allows us to strengthen the type assignment.

\begin{lemma}\label{lem:eqinst}
    If $\kenv, \tenv \{\vterm : \ttype_1\} \vdash \gterm : \mtype$ and $\kenv \Vdash \ttype_2 \geq \ttype_1$, then $\kenv, \tenv \{\vterm : \ttype_2\} \vdash \gterm : \mtype$.
\end{lemma}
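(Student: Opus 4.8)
The plan is to proceed by induction on the derivation of $\kenv, \tenv\{\vterm : \ttype_1\} \vdash \gterm : \mtype$. Inspecting the typing rules of Figure~\ref{fig:typesystem}, the scheme assigned to $\vterm$ is merely carried along in every rule except (Var), where it is actually consulted, and (Gen)/(Let), where it influences a closure. Since (Gen) is the only rule whose conclusion can be a genuine polytype, in order to feed the induction through the first premise of (Let) I would strengthen the statement to allow an arbitrary polytype conclusion $\ttype$ in place of $\mtype$, recovering the stated lemma as the special case where $\ttype$ is a monotype.

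The heart of the argument is the (Var) case. If $\gterm$ is a variable other than $\vterm$, or if $\gterm = \cterm^{\btype}$, then the entry for $\vterm$ is not inspected and the conclusion follows once I recheck the side condition that $\tenv\{\vterm : \ttype_2\}$ is well formed under $\kenv$; this holds because Definition~\ref{def:typeinst} already requires the more general scheme $\ttype_2$ in $\kenv \Vdash \ttype_2 \geq \ttype_1$ to be well formed under $\kenv$, and the remaining entries are unchanged. If instead $\gterm = \vterm$, then the premise gives $\kenv \Vdash \ttype_1 \geq \mtype$, and combining this with the hypothesis $\kenv \Vdash \ttype_2 \geq \ttype_1$ via transitivity of generic instance (Lemma~\ref{lem:transinst}) yields $\kenv \Vdash \ttype_2 \geq \mtype$, so (Var) re-derives $\kenv, \tenv\{\vterm : \ttype_2\} \vdash \vterm : \mtype$.

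The structural rules (App), (Rec), (Sel), (Modif), (Contr) and (Ext) are routine: I would apply the induction hypothesis to each premise and reapply the same rule, the kinding side conditions being untouched by the change of the binding for $\vterm$. For the binder rules (Abs) and (Let), the freshly bound variable is distinct from $\vterm$ and absent from $\dom{\tenv}$ by our convention on $\tenv\{\cdot\}$, so extending the environment commutes with replacing $\ttype_1$ by $\ttype_2$; the induction hypothesis then applies to the body under the reordered environment.

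I expect the (Gen) case, and through it (Let), to be the main obstacle. The type produced by (Gen) is $\cls{\kenv'}{\tenv\{\vterm : \ttype_1\}}{\mtype}$, which by Definition~\ref{def:typecls} depends on $\eftv{\kenv}{\tenv\{\vterm : \ttype_1\}}$. Since $\ttype_2$ is more general than $\ttype_1$, one has $\eftv{\kenv}{\ttype_2} \subseteq \eftv{\kenv}{\ttype_1}$ (generalisation binds free variables that instantiation may reintroduce), so substituting $\ttype_2$ for $\ttype_1$ can \emph{shrink} the essentially-free variables of the environment and thereby cause the closure to quantify additional variables, yielding a strictly more general scheme and a smaller residual kind assignment. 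To reconcile this I would carry, as part of the strengthened induction hypothesis, that the body is derivable at \emph{some} scheme at least as general as the original one; in the (Let) case I would then re-apply the lemma to the proper subterm $\gterm_2$ with this more general let-bound scheme (again using Lemma~\ref{lem:transinst}), and to descend from the more general scheme back to the required monotype at the top level I would use a kinded-substitution argument resting on Lemma~\ref{lem:kindsubs} and Corollary~\ref{cor:kindsubs}. Making this descent and the bookkeeping of the residual kind assignment precise is the delicate part of the proof; everything else is direct propagation of the induction hypothesis.
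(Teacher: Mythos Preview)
The paper does not supply a proof for this lemma, so there is nothing to compare against directly; I evaluate your sketch on its own merits.

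Your overall plan---induction on the derivation, with Lemma~\ref{lem:transinst} handling the (Var) case and (Gen)/(Let) singled out as the only non-routine cases---is the standard and correct approach. The strengthening to polytype conclusions is exactly what is needed to push the induction through the first premise of (Let).

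There is one concrete inaccuracy in your (Gen) discussion. You assert that $\kenv \Vdash \ttype_2 \geq \ttype_1$ implies $\eftv{\kenv}{\ttype_2} \subseteq \eftv{\kenv}{\ttype_1}$. This is false in general for Ohori-style schemes, because free variables may occur only in the kinds of the bound variables. For example, with $\kenv = \{\gamma :: \ukind\}$, $\ttype_2 = \forall \alpha :: \rkind{l:\gamma}{}.\,\inttype$ and $\ttype_1 = \inttype$, one has $\kenv \Vdash \ttype_2 \geq \ttype_1$ (take $S(\alpha) = \{l:\gamma\}$), yet $\eftv{\kenv}{\ttype_2} = \{\gamma\} \not\subseteq \emptyset = \eftv{\kenv}{\ttype_1}$. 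Fortunately your \emph{conclusion}---that the closure computed with $\ttype_2$ in the environment generalises at least the variables that the closure with $\ttype_1$ does---is still correct, but for a different reason: in the (Gen) case the outer kind assignment is $\kenv'$, and the hypothesis $\kenv' \Vdash \ttype_2 \geq \ttype_1$ forces $\ttype_2$ to be well formed under $\kenv'$, hence $\eftv{\kenv}{\ttype_2} \subseteq \dom{\kenv'}$ is disjoint from the variables $\{\vec{\alpha}\}$ already generalised. From this one gets $\{\vec{\alpha}\} \subseteq \{\vec{\alpha'}\}$ for the new set of generalised variables, which is the inequality you need. With this correction the rest of your plan (carry a possibly-more-general scheme through (Let), then reuse the lemma on the body via Lemma~\ref{lem:transinst}) goes through.
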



The following lemma shows that typings are closed under kind-respecting kinded substitutions.

\begin{lemma}\label{lem:closedksubs}
    If $\kenv_1, \tenv \vdash \gterm : \ttype$ and $(\kenv_2, \subs)$ respects $\kenv_1$, then $\kenv_2, \subs(\tenv) \vdash \gterm : \subs(\ttype)$.
\end{lemma}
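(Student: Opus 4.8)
The plan is to proceed by induction on the derivation of $\kenv_1, \tenv \vdash \gterm : \ttype$, showing in each case that applying $\subs$ to the types and to $\tenv$ yields a valid derivation over $\kenv_2$. Before the case analysis I would record one preliminary fact: because $(\kenv_2,\subs)$ respects $\kenv_1$, for every $\vtype \in \dom{\kenv_1}$ we have $\kenv_2 \Vdash \subs(\vtype) :: \subs(\kenv_1(\vtype))$, so by the note following Definition~\ref{def:kindrules} each $\subs(\vtype)$ is well formed under $\kenv_2$; hence for any $\ttype'$ well formed under $\kenv_1$ the type $\subs(\ttype')$ is well formed under $\kenv_2$, and in particular $\subs(\tenv)$ is well formed under $\kenv_2$ whenever $\tenv$ is. This discharges the well-formedness premises of (Var) and (Const). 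The purely structural cases --- (Const), (Abs), (App), (Let), (Rec) --- then follow directly from the induction hypotheses, using that $\subs$ commutes with the type constructors (Definition~\ref{def:typesubs}) and that $\subs(\tenv\{\vterm:\ttype'\}) = \subs(\tenv)\{\vterm:\subs(\ttype')\}$.

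For the record-operation cases --- (Sel), (Modif), (Contr), (Ext) --- I would combine the induction hypothesis on the subject's typing with Lemma~\ref{lem:kindsubs} to transport each kinding premise $\kenv_1 \Vdash \mtype_1 :: \gkind$ to $\kenv_2 \Vdash \subs(\mtype_1) :: \subs(\gkind)$, observing that $\subs$ distributes over the record-kind syntax so that, for instance, $\subs(\rkind{\glab:\mtype_2}{}) = \rkind{\glab:\subs(\mtype_2)}{}$ and $\subs(\cntype{\mtype_1}{\glab}{\mtype_2}) = \cntype{\subs(\mtype_1)}{\glab}{\subs(\mtype_2)}$; re-applying the same rule then gives the result. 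The (Ext) case needs one extra check: the side condition $\xbase{\mtype_1}\notin\ftv{\mtype_2}$ must be re-established as $\xbase{\subs(\mtype_1)}\notin\ftv{\subs(\mtype_2)}$. Here I would use that $\subs$ leaves the extension/contraction wrapping intact, so that $\xbase{\subs(\mtype_1)} = \xbase{\subs(\xbase{\mtype_1}))}$, and argue from the well-formedness and respecting constraints on the base; I expect this to be a genuine, if minor, point to pin down.

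The (Var) case rests on an auxiliary fact that respecting substitutions preserve the generic-instance relation: if $\kenv_1\Vdash\ttype_1\geq\ttype_2$ and $(\kenv_2,\subs)$ respects $\kenv_1$, then $\kenv_2\Vdash\subs(\ttype_1)\geq\subs(\ttype_2)$. I would prove this by unfolding Definition~\ref{def:typeinst}: after $\alpha$-renaming the quantified variables of $\ttype_1$ and $\ttype_2$ to be fresh for $\dom{\kenv_2}$ and $\dom{\subs}$, the instance is witnessed by a substitution $\subs_0$ on the bound variables of $\ttype_1$, and the witness for $\subs(\ttype_1)\geq\subs(\ttype_2)$ is the composite $\subs\circ\subs_0$ (with $\subs$ taken to be the identity on the freshly chosen bound variables), whose respecting condition follows from Corollary~\ref{cor:kindsubs}. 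Applying this to the premise $\kenv_1\Vdash\tenv(\vterm)\geq\mtype$ yields $\kenv_2\Vdash\subs(\tenv(\vterm))\geq\subs(\mtype)$, which is exactly what (Var) requires.

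The main obstacle is the (Gen) case. Here the conclusion $\kenv_1,\tenv\vdash\gterm:\ttype$ comes from a premise $\kenv,\tenv\vdash\gterm:\mtype$ with $\cls{\kenv}{\tenv}{\mtype}=(\kenv_1,\ttype)$, so that $\ttype=\forall\vtype_1::\gkind_1\cdots\forall\vtype_n::\gkind_n.\mtype$ and $\kenv_1\{\vtype_1::\gkind_1,\dots,\vtype_n::\gkind_n\}=\kenv$; the difficulty is that $\subs$ respects only $\kenv_1$, not the enlarged $\kenv$. I would therefore extend $\subs$ to $\subs'$ respecting $\kenv$ by choosing variables $\beta_1,\dots,\beta_n$ fresh for $\kenv_2$, for $\subs$, and for $\tenv$, setting $\subs'(\vtype_i)=\beta_i$ and letting $\subs'$ agree with $\subs$ elsewhere, and then checking that $(\kenv_2\{\beta_1::\subs'(\gkind_1),\dots,\beta_n::\subs'(\gkind_n)\},\subs')$ respects $\kenv$. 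The induction hypothesis on the premise gives $\kenv_2\{\beta_1::\subs'(\gkind_1),\dots\},\subs'(\tenv)\vdash\gterm:\subs'(\mtype)$, where $\subs'(\tenv)=\subs(\tenv)$ since the $\vtype_i$ are bound and hence not free in $\tenv$; re-applying (Gen) to re-close over $\beta_1,\dots,\beta_n$ should recover $\kenv_2,\subs(\tenv)\vdash\gterm:\subs(\ttype)$. The crux is discharging the closure side conditions of Definition~\ref{def:typecls}, namely that $\{\beta_1,\dots,\beta_n\}=\eftv{\kenv_2\{\beta_1::\subs'(\gkind_1),\dots\}}{\subs'(\mtype)}\setminus\eftv{\kenv_2}{\subs(\tenv)}$; this is where the freshness of the $\beta_i$ and the fact that respecting $\kenv_1$ prevents $\subs$ from pulling the generalized variables into $\eftv{\kenv_2}{\subs(\tenv)}$ (Definition~\ref{def:eftv}) must be combined carefully, and it is the main technical content of the proof.
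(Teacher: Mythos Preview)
The paper states this lemma without proof, so there is no argument to compare against directly. Your plan---induction on the typing derivation, using Lemma~\ref{lem:kindsubs} to transport kinding premises and Corollary~\ref{cor:kindsubs} to push substitutions through the generic-instance relation in (Var)---is the standard route for this kind of closure-under-substitution lemma in Ohori-style systems, and the case split you give matches the rule set in Figure~\ref{fig:typesystem}.

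The two spots you single out as delicate are indeed the real content. For (Gen), the equality of Definition~\ref{def:typecls} that you need, $\{\beta_1,\dots,\beta_n\}=\eftv{\kenv_2\{\beta_i::\subs'(\gkind_i)\}}{\subs'(\mtype)}\setminus\eftv{\kenv_2}{\subs(\tenv)}$, is not automatic: a substitution respecting $\kenv_1$ may inject into $\subs'(\mtype)$ variables of $\kenv_2$ that do not occur essentially free in $\subs(\tenv)$, so the $\kenv_2$-closure could want to generalise strictly more than the $\beta_i$. The usual fix is to show that re-applying (Gen) yields some $\ttype'$ with $\kenv_2\Vdash\ttype'\geq\subs(\ttype)$ and argue that this suffices (or strengthen the induction hypothesis to an ``up to $\geq$'' statement). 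For (Ext), preservation of $\xbase{\subs(\mtype_1)}\notin\ftv{\subs(\mtype_2)}$ under an arbitrary respecting substitution is not immediate either; your instinct that it hinges on the acyclicity built into the incremental kind-assignment notation $\kenv\{\vtype::\gkind\}$ (which forces $\ftv{\gkind}\subseteq\dom{\kenv}$ before $\vtype$ is added) is the right lever, but it does need to be made explicit. Since the paper omits all of this, your sketch is already at least as detailed as what the paper provides.
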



\section{Type Inference}
\label{sec:ti}
If we add a field to a record and then immediately remove it, we end up with a record with the same set of fields and the same kind restrictions, but with a different type. The same thing happens if we remove a preexisting field from a record and then immediately add it again. This means that extensible types can have different forms, but represent the same set of record types and records. This induces a set of identities on extensible types that can be interpreted as rewrite rules and used to define a type reduction system.

\begin{definition}
    \label{def:rewriterules}Let $\xtype$ be well formed under some kind assignment $\kenv$. The rewrite rules for the type reduction system depend on the form of $\xtype$ and are the following:
    \begin{align*}
         i) & \ \{\glab_1 : \mtype_1, \dots, \glab_i : \mtype_i, \dots, \glab_n : \mtype_n\} - \{\glab_i : \mtype_i\} \pm \{\glab' : \mtype'\} \cdots\ \ \red \ \ \{\glab_1 : \mtype_1, \dots, \glab_n : \mtype_n\} \pm \{\glab' : \mtype'\} \cdots \\\\
         ii) & \ \{\glab_1 : \mtype_1, \dots, \glab_n : \mtype_n\} + \{\glab : \mtype\} \pm \{\glab' : \mtype'\} \cdots\ \ \red \ \ \{\glab_1 : \mtype_1, \dots, \glab_n : \mtype_n, \glab : \mtype\} \pm \{\glab' : \mtype'\} \cdots \\\\
         iii) & \ \vtype \pm_1 \{\glab_1 : \mtype_1\} \cdots -_i \{\glab : \mtype\} \cdots +_j \{\glab : \mtype\} \cdots\ \  \red \\
         & \ \vtype \pm_1 \cdots \pm_{i-1} \{\glab_{i-1} : \mtype_{i~+1}\} \pm_{i+1} \{\glab_{i+1} : \mtype_{i+1}\} \cdots \pm_{j-1} \{\glab_{j-1} : \mtype_{j-1}\} \pm_{j+1} \{\glab_{j+1} : \mtype_{j+1}\} \cdots \\ \\
         iv) & \ \vtype \pm_1 \{\glab_1 : \mtype_1\} \cdots +_i \{\glab : \mtype\} \cdots -_j \{\glab : \mtype\} \cdots \ \ \red \\
         & \ \vtype \pm_1 \cdots \pm_{i-1} \{\glab_{i-1} : \mtype_{i-1}\} \pm_{i+1} \{\glab_{i+1} : \mtype_{i+1}\} \cdots \pm_{j-1} \{\glab_{j-1} : \mtype_{j-1}\} \pm_{j+1} \{\glab_{j+1} : \mtype_{j+1}\} \cdots
    \end{align*}
    To ensure that $\red$ is confluent without considering the order of type extensions and/or contractions, we are going to assume that, in rules \textit{iii)} and \textit{iv)}, $i$ and $j$ are the positions of the two first occurrences of those same-labelled type contraction and extension, in the case of rule \textit{iii)}, or type extension and contraction, in the case of rule \textit{ix)}.
\end{definition}

We can show that the type reduction system is convergent, i.e. that normal forms are unique. This fact in conjunction with the fact that each reduction step preserves the set of kind restrictions of a type (see Proposition~\ref{prop:preserve}), allows us to replace any type with its normal form. Having types in their normal forms allows us to to identify types that correspond to records that are constructed from a particular record by adding and removing fields in different orders but that end up having the same set of fields.

\begin{proposition}\label{prop:convergence}
    $\red$ is convergent.
\end{proposition}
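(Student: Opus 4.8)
The plan is to establish convergence in the standard way: show that $\red$ is both strongly normalizing and locally confluent, and then invoke Newman's Lemma to conclude confluence, so that every extensible type has a unique normal form.

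For termination, I would assign to each type $\xtype$ the natural number $\#(\xtype)$ counting the total number of type-extension and type-contraction operators occurring in $\xtype$, including those nested inside the field types $\mtype$ of record literals and of the operations themselves. A routine inspection of Definition~\ref{def:rewriterules} shows that every rule strictly decreases this measure: rules $i)$ and $ii)$ erase exactly one spine operator (and, in the case of $i)$, additionally discard the field type it acted on), while rules $iii)$ and $iv)$ erase a whole cancelling pair. Since a reduction taking place inside a field type is itself an instance of one of these four rules applied at a subterm, it decreases the global count as well. Hence no infinite reduction sequence exists and $\red$ is strongly normalizing.

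For local confluence I would carry out a critical-pair analysis organized around two structural observations. First, the base $\xbase{\xtype}$ is invariant under $\red$ and is either a type variable or a record literal: rules $i)$ and $ii)$ fire only when the base is a record literal, and only on the operator adjacent to that literal, so along such a spine the redex is uniquely determined; rules $iii)$ and $iv)$ fire only when the base is a variable. The two rule groups therefore never compete on the spine. Second, for a variable base the proviso that $i$ and $j$ be the positions of the \emph{two first occurrences} is exactly what guarantees that, for each label, at most one of $iii)$ and $iv)$ is applicable (the first same-labelled contraction and extension occur in one order or the other, not both) and that it cancels a uniquely determined pair; since operators carrying distinct labels occupy disjoint positions, and $\glab$-operations must alternate in any well-formed type (by the kinding rules of Definition~\ref{def:kindrules}), all spine cancellation redexes are \emph{pairwise disjoint}, and firing one leaves the pairing of the others untouched. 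Disjoint redexes join trivially, so the spine contributes no problematic critical pairs. It remains to treat redexes occurring inside field types: these live in positions disjoint from the spine and hence commute with spine steps, with a single delicate exception.

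The exception, which I expect to be the main obstacle, is the non-left-linearity of rules $i)$, $iii)$ and $iv)$: each requires the field type carried by the contraction (and, for $iii)$ and $iv)$, by the matching extension) to coincide syntactically with the one it cancels against. Reducing only one of the two matching copies temporarily blocks the rule, producing an apparent divergence from the step that fired the rule directly. The key point is that all three rules \emph{discard} the matched field type, so the result of firing the rule does not depend on the reduct of that type; one recovers joinability by reducing the remaining copy and then firing the rule, reaching the same term as the direct application. Checking that every such critical pair closes in this manner is the crux of the argument. With all critical pairs shown joinable, $\red$ is weakly Church--Rosser, and Newman's Lemma combined with strong normalization yields confluence; uniqueness of normal forms, i.e.\ convergence of $\red$, follows.
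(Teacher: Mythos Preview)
Your proposal is correct and considerably more detailed than what the paper offers. The paper's own justification is a two-sentence sketch following the proposition: termination is ``obvious'' because each rule strictly decreases the number of extension/contraction operators (exactly your measure $\#(\xtype)$), and confluence is claimed ``by construction,'' i.e.\ as a direct consequence of the first-occurrence proviso attached to rules $iii)$ and $iv)$ in Definition~\ref{def:rewriterules}. No critical-pair analysis is carried out, and the non-left-linearity of rules $i)$, $iii)$, $iv)$ is not discussed at all.

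Your route via Newman's Lemma unpacks what ``by construction'' is supposed to mean and, in doing so, surfaces the one genuinely non-trivial case the paper glosses over: a step inside a duplicated field type can temporarily destroy the syntactic match required by the rule, and you correctly observe that joinability is recovered because the rule discards the matched copies. That is the real content a careful proof must supply, and your treatment of it is sound. One minor remark: your appeal to the kinding rules of Definition~\ref{def:kindrules} for alternation of same-label operations is not strictly needed (the first-occurrence proviso already pins down a unique same-label redex, and distinct-label redexes occupy disjoint spine positions regardless), and it slightly conflates well-formedness in the sense of Definition~\ref{def:wftuk} with derivability of a record kind; you can simply drop that clause without loss.
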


It is obvious that $\red$ is terminating, since the number of type extensions and type contractions diminishes with each reduction step. Also, note that $\red$ is confluent by construction.

\begin{definition}\label{def:normform}Let $\red^*$ be the reflexive and transitive closure of $\red$. $(i)$ We say that $\ttype_2$ is a normal form of $\ttype_1$ if, and only if, $\ttype_1 \red^* \ttype_2$ and there is no $\ttype_3$ such that $\ttype_2 \red \ttype_3$. $(ii)$ We write $\normof{\ttype_1}$ for the (unique) normal form of $\ttype_1$. $(iii)$ We write $\ttype_1 \norm \ttype_2$ if $\normof{\ttype_1} = \normof{\ttype_2}$.
\end{definition}
    
 
Now we show that each reduction step $\red$ preserves kind restrictions.

\begin{proposition}\label{prop:preserve}
    Every rewrite rule $\red$, transforms a type $\xtype_1$ into a $\xtype_2$, such that $\kenv \Vdash \xtype_1 :: \gkind$ if, and only if, $\kenv \Vdash \xtype_2 :: \gkind$.
\end{proposition}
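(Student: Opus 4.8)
The plan is to argue by case analysis on the four rewrite rules of Definition~\ref{def:rewriterules}, in each case establishing both directions of the biconditional by relating a kinding derivation for $\xtype_1$ to one for $\xtype_2$. The guiding observation is that a derivation built with the kinding rules $iv)$ and $v)$ of Definition~\ref{def:kindrules} strips off the outermost type extension or contraction one step at a time, so it is essentially linear: it proceeds from the outermost operation inward until it reaches a base case, which is kinding rule $ii)$ when the base is a concrete record and kinding rule $iii)$ when it is a type variable. Since $\red$ changes neither the base type nor the free type variables of a well-formed type, the case $\gkind = \ukind$ is immediate for every rewrite rule: $\xtype_1$ and $\xtype_2$ are well formed under the same $\kenv$, so kinding rule $i)$ applies to both. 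It therefore remains to treat record kinds.

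First I would handle rewrite rules $i)$ and $ii)$, whose base is a concrete record. For rewrite rule $ii)$, the extension of $\{\glab : \mtype\}$ over the literal record $\{\glab_1 : \mtype_1, \dots, \glab_n : \mtype_n\}$: the only step distinguishing the two sides is whether $\glab$ is placed on the left of the kind by an application of kinding rule $iv)$ or is already read off as an explicit field by kinding rule $ii)$. Because the marker $[\glab : \mtype]$ in kinding rule $iv)$ is optional, the two routes generate exactly the same family of derivable kinds, and the disjointness side condition of kinding rule $ii)$ keeps the present and absent field sets consistent on both sides. Rewrite rule $i)$ is symmetric, now using kinding rule $v)$ together with the disappearance of $\glab_i$ from the explicit field list.

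The core of the argument is rewrite rules $iii)$ and $iv)$, in which a contraction and an extension carrying the same $\glab : \mtype$ cancel. Reading the sequence from the base variable outward, the inner operation moves $\glab : \mtype$ onto one side of the intermediate kind (onto the right by the premise of kinding rule $v)$ in rewrite rule $iii)$, onto the left by the premise of kinding rule $iv)$ in rewrite rule $iv)$), and the matching outer operation moves it back through the premise of the dual kinding rule. Since the conclusions of kinding rules $iv)$ and $v)$ carry $\glab$ only optionally, the composite of the two cancelling steps acts as the identity on the kind restriction, so the kinds derivable for $\xtype_1$ above the cancelling pair coincide with those derivable for $\xtype_2$ at the same point. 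That $i$ and $j$ are the first two occurrences of $\glab$, together with Proposition~\ref{prop:kindrules}, guarantees that no intermediate operation touches $\glab$ and that $\glab$ is never simultaneously present and absent, so its status is unambiguous at every step.

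The step I expect to be the main obstacle is the ``only if'' direction of rewrite rules $iii)$ and $iv)$, namely reconstructing a derivation for $\xtype_1$ from one for $\xtype_2$ without spuriously enlarging the set of derivable record kinds. Here one must verify that the inner operation of the cancelling pair is genuinely applicable, i.e. that the base variable already exposes $\glab$ on the side demanded by the premise of kinding rule $iv)$ or $v)$; this is where the well-kindedness of the redex $\xtype_1$ is used, and it is the reason the optional markers reproduce exactly the reduct's kinds rather than strictly more. Subordinate to this is the bookkeeping for the operations lying strictly between positions $i$ and $j$ (the middle ``$\cdots$''): since these act on labels distinct from $\glab$, their derivation steps commute past the cancelling pair, appealing to the order-insensitivity of operations on distinct labels noted after Example~\ref{ex:equalityofexttype}. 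Once these points are settled, matching the two derivations is routine and the biconditional follows in each case.
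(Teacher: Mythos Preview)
The paper states Proposition~\ref{prop:preserve} without proof, so there is no reference argument to compare against. Your case analysis on the four rewrite rules---unwinding a kinding derivation from the outside in via rules $iv)$ and $v)$ of Definition~\ref{def:kindrules} until a base case $ii)$ or $iii)$ is reached---is the natural approach and is essentially correct.

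One point is worth making explicit, and you already put your finger on it. The hard direction for rewrite rules $iii)$ and $iv)$ genuinely requires the redex $\xtype_1$ to be \emph{kindable}, not merely well formed in the sense of Definition~\ref{def:wftuk}. For the simplest instance of rewrite rule $iv)$, take $\xtype_1 = \cntype{(\extype{\vtype}{\glab}{\mtype})}{\glab}{\mtype}$ reducing to $\xtype_2 = \vtype$: any record kind for $\xtype_1$ forces, via the premise of kinding rule $iv)$, that $\glab:\mtype$ lies in the right component of $\kenv(\vtype)$, whereas $\xtype_2 = \vtype$ can receive a record kind without any such constraint on $\kenv(\vtype)$. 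So without assuming that $\xtype_1$ has \emph{some} record kind under $\kenv$, the biconditional fails. Your remark that ``this is where the well-kindedness of the redex $\xtype_1$ is used'' is exactly the needed hypothesis (stronger than the ``well formed'' in Definition~\ref{def:rewriterules}), and with it your argument closes: the cancelling pair inserts and then removes $\glab:\mtype$ from one side of the intermediate kind, the optional markers in kinding rules $iv)$ and $v)$ prevent spurious kinds from being manufactured, and Proposition~\ref{prop:kindrules} together with the first-occurrence convention handles the operations strictly between positions $i$ and $j$.
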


Every field that appears in an extensible type that is a subtype of a type in normal form occurs exactly once.

\begin{proposition}\label{prop:canonicity}
    If $\xtype$ is in normal form, then every label that appears in it occurs exactly once.
\end{proposition}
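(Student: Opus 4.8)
The plan is to argue by cases on the base $\xbase{\xtype}$, which by the grammar is either a type variable or a record type, drawing the decisive exclusions from well-formedness exactly where the rewrite rules are silent. Throughout I take $\xtype$ to be well formed under some $\kenv$: this is the setting in which $\red$ is defined, and since no rewrite rule introduces a fresh type variable, $\ftv{\xtype}$ can only shrink along a reduction, so well-formedness is preserved and a normal form of a well-formed type is again well formed. Every extensible type has the shape of a base followed by a finite list of operations, each an extension or a contraction tagged by a label, and the labels occurring in the spine of $\xtype$ are precisely the fields of the record base (if any) together with those tags; it therefore suffices to show that no label tags two operations, and that no tag duplicates a field of a record base.

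If $\xbase{\xtype}$ is a record $\{\glab_1 : \mtype_1, \dots, \glab_n : \mtype_n\}$, I first claim $\xtype$ carries no operations at all. Looking at the innermost operation applied to the record: were it an extension, rewrite rule $(ii)$ would fire; were it a contraction $\cntype{\cdot}{\glab}{\mtype}$, then well-formedness, via kinding rule $(v)$, forces $\glab$ to lie on the left of the record's kind, i.e.\ to be one of the $\glab_i$, whence rewrite rule $(i)$ fires. Either way $\xtype$ would not be in normal form, so $\xtype$ is the bare record, whose labels $\glab_1, \dots, \glab_n$ are pairwise distinct by the standing convention on record types; each thus occurs exactly once.

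The case $\xbase{\xtype} = \vtype$ is the heart of the argument: here $\xtype = \vtype \pm_1 \{\glab_1 : \mtype_1\} \cdots \pm_k \{\glab_k : \mtype_k\}$, and I would suppose some label $\glab$ tags two operations and split on their signs. If two such operations have opposite signs then, reading along the spine, a contraction precedes a later extension or an extension precedes a later contraction, so one of rewrite rules $(iii)$, $(iv)$ applies to the relevant pair, contradicting normality. If instead they have the same sign, I would derive a contradiction from well-formedness together with Proposition~\ref{prop:kindrules}: taking two same-signed occurrences with no occurrence of $\glab$ between them, the prefix ending just before the later one is itself well formed, and an invariant (proved by induction along the spine) says that the side on which $\glab$ sits in that prefix's kind is fixed by the last operation on $\glab$ (left after an extension, right after a contraction, by kinding rules $(iv)$, $(v)$) and is left untouched by the intervening operations, which carry labels other than $\glab$. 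But the later same-signed operation requires, again by kinding rule $(iv)$ or $(v)$, that $\glab$ sit on the opposite side of that very prefix, which Proposition~\ref{prop:kindrules} forbids. Hence each label tags at most one operation, finishing this case.

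The step I expect to be the main obstacle is precisely this same-sign subcase: two extensions, or two contractions, of one label never form a redex, so the rewrite system cannot see them and they must be excluded on typing grounds alone, which forces the bookkeeping observation that operations on other labels do not disturb the left/right membership of $\glab$ in each prefix's kind. Finally, if the statement is read as ranging over every extensible type occurring as a subterm of a normal form, I would lift the result by structural induction: since $\red$ is a congruence, every such subterm is itself in normal form and well formed, so the case analysis above applies to it verbatim.
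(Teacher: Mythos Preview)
The paper states this proposition without proof, so there is no argument to compare against. Your case analysis is sound and supplies what the paper omits: rewrite rules $(i)$ and $(ii)$ force a normal form with a record base to be a bare record; over a variable base, rules $(iii)$ and $(iv)$ exclude opposite-sign repetitions of a label, and your prefix-kind invariant together with Proposition~\ref{prop:kindrules} excludes same-sign repetitions. Your concluding remark about lifting to extensible subterms of a normal form matches the paper's own note immediately after the proposition.

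One point to sharpen: you write ``well-formedness'' but what your argument actually uses is that $\xtype$ admits a kind derivation under $\kenv$, which is strictly stronger than Definition~\ref{def:wftuk} (merely $\ftv{\xtype} \subseteq \dom{\kenv}$). The stronger hypothesis is essential: the type $\extype{\extype{\vtype}{\glab}{\mtype}}{\glab}{\mtype}$ is well formed in the sense of Definition~\ref{def:wftuk} whenever $\vtype$ and the free variables of $\mtype$ lie in $\dom{\kenv}$, and it is already a normal form since no rewrite rule touches same-sign pairs, yet it carries $\glab$ twice. Kindability is precisely what your same-sign subcase exploits to rule this out. The paper clearly intends the kindable reading (Definition~\ref{def:rewriterules} and Proposition~\ref{prop:preserve} presuppose it), so your proof is the right one; just name the hypothesis as kindability rather than well-formedness.
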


Note that if a type is in normal form, then every extensible type that appears as its subtype will have a type variable base type and each field that appears in it will occur exactly once.

\begin{proposition}\label{prop:iden_norm}
    Let $\subs$ be a substitution and $\mtype$ be a type. Then $\normof{\subs(\normof{\mtype})} = \normof{\subs(\mtype)}$.
\end{proposition}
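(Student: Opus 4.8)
The plan is to reduce the statement to a single-step \emph{commutation} property between substitution and reduction, and then lift it to the full reduction sequence $\mtype \red^* \normof{\mtype}$ by induction. Since $\red$ is convergent (Proposition~\ref{prop:convergence}), normal forms are unique, so it suffices to show that whenever two types share a common reduct they have the same normal form; in particular the relation $\norm$ is transitive. Throughout I work with well-formed types and tacitly assume that $\subs$ preserves well-formedness, so that every type produced below still has a unique normal form by Proposition~\ref{prop:convergence}.

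The key lemma I would establish first is: if $\mtype \red \mtype'$ in a single step, then $\subs(\mtype) \norm \subs(\mtype')$, i.e. $\subs(\mtype)$ and $\subs(\mtype')$ have a common reduct. Because $\red$ is a congruence, one step factors as $\mtype = C[r] \red C[r'] = \mtype'$ for a redex $r \red r'$ sitting inside a type context $C$, and since $\subs$ is a homomorphism on type structure (Definition~\ref{def:typesubs}) we have $\subs(\mtype) = \subs(C)[\subs(r)]$ and $\subs(\mtype') = \subs(C)[\subs(r')]$. Thus it is enough to show that $\subs(r)$ and $\subs(r')$ reduce to a common type, after which the surrounding context $\subs(C)$ is applied uniformly to both.

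I would then do a case analysis on which rewrite rule of Definition~\ref{def:rewriterules} fired. For rules $i)$ and $ii)$ the base of the redex is a record literal, and $\subs$ keeps it a record literal while acting only on the field types; hence $\subs(r)$ is again an instance of the same rule and $\subs(r) \red \subs(r')$ in one step. For rules $iii)$ and $iv)$ the base is a type variable $\vtype$, and the behaviour depends on the shape of $\subs(\vtype)$: if $\subs(\vtype)$ is again a variable, the cancelling contraction/extension pair for the label $\glab$ is preserved and the very same rule applies, giving $\subs(r) \red \subs(r')$; if $\subs(\vtype)$ is a record or a composite extensible type, the variable-level cancellation no longer fires directly, and instead the operations must be absorbed into the new base through rules $i)$ and $ii)$, or handled at the level of the new variable base once the operations of $\subs(\vtype)$ are taken into account. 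In these cases I would argue that both $\subs(r)$ and $\subs(r')$ reach the same reduct by checking that the \emph{net} effect on the base is identical: the pair eliminated by rule $iii)$ or $iv)$ deletes and then re-adds (or adds and then deletes) the single field $\glab$ with the same type, with no other operation on $\glab$ in between, which is exactly what the ``first two occurrences'' convention of Definition~\ref{def:rewriterules} guarantees. Canonicity of normal forms (Proposition~\ref{prop:canonicity}), together with the preservation of kind restrictions (Proposition~\ref{prop:preserve}) which keeps every intermediate type well-formed, then lets me conclude that the two normal forms coincide.

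With the single-step lemma in hand, the proposition follows by induction on the length of a reduction sequence $\mtype = \mtype_0 \red \mtype_1 \red \cdots \red \mtype_k = \normof{\mtype}$: applying the lemma to each step and using transitivity of $\norm$ yields $\subs(\mtype) \norm \subs(\normof{\mtype})$, that is $\normof{\subs(\mtype)} = \normof{\subs(\normof{\mtype})}$, which is precisely the claim. The main obstacle is the case of rules $iii)$ and $iv)$ when $\subs$ instantiates the base variable with a record, or with another extensible type carrying its own extensions and contractions: there a purely variable-level cancellation on the left-hand side turns into a cascade of record absorptions on the right-hand side, and one must verify carefully, using the first-occurrence discipline and Proposition~\ref{prop:canonicity}, that the two reduction strategies still meet at a common reduct.
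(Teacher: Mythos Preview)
Your approach is correct and takes a genuinely different route from the paper. You proceed via a single-step commutation lemma (if $\mtype \red \mtype'$ then $\subs(\mtype) \norm \subs(\mtype')$) and then lift it by induction on the length of the reduction $\mtype \red^* \normof{\mtype}$; this is the standard rewriting-theoretic technique and it makes the inductive structure explicit. The paper instead argues directly at the level of occurrences: it fixes an extensible subtype $\xtype$ of $\mtype$, distinguishes the variables in $\dom{\subs}$ that occur in $\xtype$ but not in $\normof{\xtype}$ from those that survive, and checks in each case that the type operations left after normalising $\subs(\xtype)$ and $\subs(\normof{\xtype})$ coincide. Your decomposition is cleaner and pinpoints exactly where the work is (rules \textit{iii)}/\textit{iv)} when $\subs$ sends the base variable to a record or to a compound extensible type carrying its own $\glab$-operations); the paper's argument sidesteps that explicit case split by reasoning globally about which operations are ``removed by normalisation'', but at the price of being closer to a sketch than a proof. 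Both arguments ultimately rest on the same ingredients---convergence (Proposition~\ref{prop:convergence}), kind preservation (Proposition~\ref{prop:preserve}), and canonicity (Proposition~\ref{prop:canonicity})---and both leave the delicate sub-case where $\subs(\vtype)$ introduces fresh operations on the cancelled label somewhat implicit; your version at least names it as the main obstacle.
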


\begin{proof}
    Let $\normof{\mtype} = \mtype$. Then there exists an extensible type $\xtype$ in $\mtype$ that can be normalized to its normal form $\normof{\xtype}$.
    Let $\vtype \in \dom{\subs}$ appear in $\xtype$, but not in $\normof{\xtype}$. then $\vtype$ only appears in type operations of $\xtype$ that were removed by normalization.
    Clearly, $\subs(\vtype)$ will appear in $\subs(\xtype)$, but not in $\subs(\normof{\xtype})$. But then, since $\vtype$ only appeared in type operations of $\xtype$ that were removed by normalization, $\subs(\vtype)$ will only appear in those same type operations. Therefore, $\subs(\vtype)$ will not appear in $\normof{\subs(\xtype)}$. Thus $\subs(\vtype)$ will neither appear in $\normof{\subs(\normof{\xtype})}$ nor in $\normof{\subs(\xtype)}$.
    Let $\vtype \in \dom{\subs}$ appear in $\xtype$ and in $\normof{\xtype}$. Then $\vtype$ must not appear in any type operations of $\xtype$ that can be removed by normalization.
    Then $\subs(\vtype)$ will appear both in $\subs(\xtype)$ and $\subs(\normof{\xtype})$:
    \begin{itemize}
        \item If $\subs(\normof{\xtype})$ is in normal form, then $\subs(\vtype)$ is in normal form and will appear in both $\normof{\subs(\xtype)}$ and $\normof{\subs(\normof{\xtype})}$;
        \item If $\subs(\normof{\xtype})$ is not in normal form, then either $\subs(\vtype)$ is not in normal form, $\subs(\vtype)$ is in normal form, but type operations that can be cancelled out by preexisting type operations in $\normof{\xtype}$ were added to $\normof{\xtype}$, or both. In any case, these operations will not appear in $\normof{\subs(\normof{\xtype})}$ or $\normof{\subs(\xtype)}$. 
    \end{itemize}
    This means we can conclude that type type operations that appear in $\normof{\subs(\normof{\xtype})}$ and $\normof{\subs(\xtype)}$ are precisely the same, \textit{i.e.} $\normof{\subs(\normof{\xtype})} = \normof{\subs(\xtype)}$.
\end{proof}

To account for normal forms, we need to redefine what it means for two substitutions to be equal.

\begin{definition}\label{def:eqsubs}
    Let $\subs_1$ and $\subs_2$ be two substitutions. We say that $\subs_1 = \subs_2$ if, for any type $\mtype$, $\subs_1(\mtype) \norm \subs_2(\mtype)$.
\end{definition}

Note that if two substitutions are syntactically equal, then they are also equal in the sense of Definition~\ref{def:eqsubs}.

\begin{proposition}
    Let $\subs_1$, $\subs_2$, $\subs_1'$, and $\subs_2'$ be substitutions, such that $\subs_1 = \subs_2$ and $\subs_1' = \subs_2'$. Then $\subs_1 \circ \subs_1' = \subs_2 \circ \subs_2'$.
\end{proposition}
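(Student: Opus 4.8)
The plan is to unfold Definition~\ref{def:eqsubs} and reduce everything to a statement about normal forms. Showing $\subs_1 \circ \subs_1' = \subs_2 \circ \subs_2'$ amounts, by definition, to proving that for every type $\mtype$ we have $\normof{\subs_1(\subs_1'(\mtype))} = \normof{\subs_2(\subs_2'(\mtype))}$. The two hypotheses, unfolded the same way, give us $\normof{\subs_1(\mtype')} = \normof{\subs_2(\mtype')}$ for every type $\mtype'$, and likewise $\normof{\subs_1'(\mtype')} = \normof{\subs_2'(\mtype')}$ for every type $\mtype'$.

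The key ingredient is Proposition~\ref{prop:iden_norm}, which lets us commute normalization with substitution: $\normof{\subs(\normof{\mtype})} = \normof{\subs(\mtype)}$. First I would fix an arbitrary type $\mtype$ and abbreviate $\ttype_1 = \subs_1'(\mtype)$ and $\ttype_2 = \subs_2'(\mtype)$; the hypothesis $\subs_1' = \subs_2'$ then gives $\normof{\ttype_1} = \normof{\ttype_2}$. Then I would build the chain
\begin{align*}
\normof{\subs_1(\ttype_1)} &= \normof{\subs_1(\normof{\ttype_1})} = \normof{\subs_1(\normof{\ttype_2})} = \normof{\subs_1(\ttype_2)} = \normof{\subs_2(\ttype_2)},
\end{align*}
where the first and third equalities are Proposition~\ref{prop:iden_norm}, the second uses that $\subs_1$ applied to the syntactically identical types $\normof{\ttype_1}$ and $\normof{\ttype_2}$ produces syntactically identical results, and the last equality is the hypothesis $\subs_1 = \subs_2$ instantiated at $\ttype_2$. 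Re-expanding $\ttype_1$ and $\ttype_2$ yields exactly $\normof{\subs_1(\subs_1'(\mtype))} = \normof{\subs_2(\subs_2'(\mtype))}$, and since $\mtype$ was arbitrary this is the claim.

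The only subtlety — and the step I expect to cost the most care — is the interplay between the two notions of equality in play: genuine syntactic equality of types and the coarser equality-up-to-normal-form $\norm$ of Definition~\ref{def:eqsubs}. The hypotheses are statements of the coarse kind, so one cannot simply replace $\ttype_1$ by $\ttype_2$ inside $\subs_1(-)$. The role of Proposition~\ref{prop:iden_norm} is precisely to push the outer substitution onto the \emph{already-normalized} arguments, at which point $\normof{\ttype_1} = \normof{\ttype_2}$ is a true syntactic identity to which the substitution can be applied on both sides. Keeping track of where syntactic equality is permitted versus where only $\norm$-equality holds is the whole content of the argument; everything else is routine bookkeeping.
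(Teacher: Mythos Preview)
Your proposal is correct and follows essentially the same route as the paper's proof: both arguments pass through $\normof{\subs_1'(\mtype)} = \normof{\subs_2'(\mtype)}$ (a genuine syntactic equality), apply the outer substitutions, and invoke Proposition~\ref{prop:iden_norm} to strip the inner normalization. Your version is in fact more careful than the paper's about distinguishing syntactic equality from $\norm$-equality, which is exactly the point where the paper's own proof is slightly elliptical.
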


\begin{proof}Since $\subs_1 = \subs_2$ and $\subs_1' = \subs_2'$, we know that $\subs_1(\tau) = \subs_2(\tau)$ and $\subs_1'(\tau) = \subs_2'(\tau)$ for any type $\tau$. But then $\normof{\subs_1'(\tau)} = \normof{\subs_2'(\tau)}$. Therefore, $\subs_1(\normof{\subs_1'(\tau)}) = \subs_2(\normof{\subs_2'(\tau)})$ and, by Proposition~\ref{prop:iden_norm}, $\subs_1(\subs'_1(\tau)) = \subs_2 (\subs'_2(\tau))$. Thus, $\subs_1 \circ \subs'_1 = \subs_2 \circ \subs'_2$.
\end{proof}

\subsection{Kinded Unification}
We are going to extend the kinded unification algorithm used by Ohori in~\cite{Ohori95} with rules for unifying extensible types with type variables and other extensible types. 

\begin{definition}
        A kinded set of equations is a pair $(\kenv, \geqs)$ consisting of a kind assignment $\kenv$ and a set $\geqs$ of pairs of types such that $\geqs$ is well formed under $\kenv$.
\end{definition}

\begin{definition}
        A substitution $\subs$ satisfies $\geqs$ if $\subs(\mtype_1) \norm \subs(\mtype_2)$, for all $(\mtype_1, \mtype_2) \in \geqs$.
\end{definition}

\begin{definition}
        A kinded substitution $(\kenv_1, \subs)$ is a unifier of a kinded set of equations $(\kenv, \geqs)$ if it respects $\kenv$ and if $\subs$ satisfies $\geqs$.
\end{definition}

\begin{definition}
        $(\kenv_1, \subs)$ is a most general unifier of $(\kenv_2, \geqs)$ if it is a unifier of $(\kenv_2, \geqs)$ and if for any unifier $(\kenv_3, \subs_2)$ of $(\kenv_2, \geqs)$, there is some substitution $\subs_3$ such that $(\kenv_3, \subs_3)$ respects $\kenv_1$ and $\subs_2 = \subs_3 \circ \subs$.
\end{definition}

The unification algorithm $\U$ is defined by transformation. Each rule transforms a 4-tuple of the form $(\geqs, \kenv, \subs, \skenv)$ consisting of a set $\geqs$ of type equations, a kind assignment $\kenv$, a substitution $\subs$, and a (not necessarily well-formed) kind assignment $\skenv$. $\geqs$ keeps the set of equations to be unified; $\kenv$ specifies kind constraints to be verified; $\subs$ records ``solved'' equations in the form of a substitution; and $\skenv$ records ``solved'' kind constraints that have already been verified for $\subs$.

In specifying rules, we treat $\kenv$, $\skenv$, and $\subs$ as sets of pairs. We also use the following notations. Let $\fields$ range over functions from a finite set of labels to types. We write $\{\fields\}$ and $\rkind{\lfields}{\rfields}$ to denote the record type identified by $\fields$ and the record kind identified by $\lfields$ and $\rfields$, respectively. For two functions $\fields_1$ and $\fields_2$ we write $\fields_1 + \fields_2$ for the function $\fields$ such that $\dom{\fields} = \dom{\fields_1} \cup \dom{\fields_2}$ and, for $\glab \in \dom{\fields}$, if $\glab \in \dom{\fields_1}$, then $\fields(\glab) = \fields_1(\glab)$; otherwise, $\fields(\glab) = \fields_2(\glab)$. For those same two functions, we write $\fields_1 - \fields_2$ for the function $\fields$ such that $\dom{\fields} = \dom{\fields_1} \setminus \dom{\fields_2}$ and, for $\glab \in \dom{\fields}$, $\fields(\glab) = \fields_1(\glab)$. For an extensible type $\xtype$, we write $\efields{\xtype}$ and $\cfields{\xtype}$ for the functions that represent the set of type extensions and contractions of $\xtype$, respectively.

\begin{example}
    Let $\xtype = \extype{\cntype{\extype{\vtype}{\glab_1}{\mtype_1}}{\glab_2}{\mtype_2}}{\glab_3}{\mtype_3}$. Then $\efields{\xtype}$ is the function (of domain $\{\glab_1, \glab_3\}$) that sends $\glab_1$ to $\mtype_1$ and $\glab_3$ to $\mtype_3$ and $\cfields{\xtype}$ is the function (of domain $\{\glab_2\}$) that sends $\glab_2$ to $\mtype_2$.
\end{example}

\begin{definition}\label{def:unifalg}
    Let $(\kenv, \geqs)$ be a given kinded set of equations. Algorithm $\U$ first transforms the system $(\geqs, \kenv, \emptyset, \emptyset)$ into $(\geqs', \kenv', \subs, \skenv)$ until no more rules can be applied. It then returns the pair $(\kenv', \subs)$, if $\geqs' = \emptyset$; otherwise, it fails. Its rules can be found in Figure~\ref{fig:kindunif}.
\end{definition}

\begin{figure}
    {\small
    \begin{align*}
             i) & \ (\geqs \cup \{(\mtype_1, \mtype_2)\}, \kenv, \subs, \skenv) \Ra (\geqs, \kenv, \subs, \skenv) \ \text{if} \ \mtype_1 \norm \mtype_2 \\
             ii) & \ (\geqs \cup \{(\vtype, \mtype)\}, \kenv \cup \{(\vtype, \ukind)\}, \subs, \skenv) \Ra ([\mtype/\vtype] \geqs, [\mtype/\vtype] \kenv, [\mtype/\vtype](\subs) \cup \{(\vtype, \mtype)\}, [\mtype/\vtype](\skenv) \cup \{(\vtype, \ukind)\}) \\
             & \qquad \text{if} \ \vtype \not\in \ftv{\mtype} \\ 
             iii) & \ (\geqs \cup \{(\vtype_1, \vtype_2)\}, \kenv \cup \{(\vtype_1, \rkind{\lfields_1}{\rfields_1}), (\vtype_2, \rkind{\lfields_2}{\rfields_2})\}, \subs, \skenv) \Ra \\
             & \qquad ([\vtype_2/\vtype_1](\geqs \cup \{(\lfields_1(\glab), \lfields_2(\glab)) \mid \glab \in \dom{\lfields_1} \cap \dom{\lfields_2}\} \cup \{(\rfields_1(\glab), \rfields_2(\glab)) \mid \glab \in \dom{\rfields_1} \cap \dom{\rfields_2}\}), \\
             & \qquad \ [\vtype_2/\vtype_1](\kenv) \cup \{(\vtype_2, [\vtype_2/\vtype_1](\rkind{\lfields_1 + \lfields_2}{\rfields_1 + \rfields_2}))\}, \\
             & \qquad \ [\vtype_2/\vtype_1](\subs)\cup \{(\vtype_1, \vtype_2)\}, [\vtype_2/\vtype_1](\skenv) \cup \{(\vtype_1, \rkind{\lfields_1}{\rfields_1})\}) \\
             & \qquad \text{if} \ \dom{\lfields_1} \cap \dom{\rfields_2} = \emptyset \ \text{and} \ \dom{\rfields_1} \cap \dom{\lfields_2} = \emptyset \\
             iv) & \ (\geqs \cup \{(\vtype, \{F_2\})\}, \kenv \cup \{(\vtype, \rkind{\lfields_1}{\rfields_1})\}, \subs, \skenv) \Ra \\
             & \qquad ([\{\fields_2\}/\vtype](\geqs \cup \{(\lfields_1(l), \fields_2(l)) \mid \glab \in \dom{\lfields_1}\}), \\
             & \qquad \ [\{\fields_2\}/\vtype](\kenv),  [\{\fields_2\}/\vtype](\subs) \cup \{(\vtype, \{\fields_2)\})\}, [\{\fields_2\}/\vtype](\skenv) \cup \{(\vtype, \rkind{\lfields_1}{\rfields_1})\}) \\
             & \qquad \text{if} \ \dom{\lfields_1} \subseteq \dom{\fields_2}, \dom{\rfields_1} \cap \dom{\fields_2} = \emptyset, \ \text{and} \ \vtype \not \in \ftv{\{F_2\}} \\
             v) & \ (\geqs \cup \{(\{F_1\}, \{F_2\})\}, \kenv, \subs, \skenv) \Ra (\geqs \cup \{(F_1(l), F_2(l)) \mid l \in \dom{F_1}\}, \kenv, \subs, \skenv) \\
             & \qquad \text{if} \ \dom{F_1} = \dom{F_2} \\
             vi) & \ (\geqs \cup \{(\atype{\mtype^1_1}{\mtype^2_1}, \atype{\mtype^1_2}{\mtype^2_2}\}, \kenv, \subs, \skenv) \Ra (\geqs \cup \{(\mtype^1_1, \mtype^1_2), (\mtype^2_1, \mtype^2_2)\}, \kenv, \subs, \skenv) \\
             vii) & \ (\geqs \cup \{(\vtype, \xtype)\}, \kenv \cup \{(\vtype, \rkind{\lfields_1}{\rfields_1}), (\xbase{\xtype}, \rkind{\lfields_2}{\rfields_2})\}, \subs, \skenv) \Ra \\
             & \qquad ([\xtype/\vtype](\geqs \cup \{(\lfields_1(\glab), (\rfields_2 + (\lfields_2 - \cfields{\normof{\xtype}}))(\glab)) \mid \glab \in \dom{\lfields_1} \cap \dom{\rfields_2 + (\lfields_2 - \cfields{\normof{\xtype}})}\} \\
             & \qquad \qquad \qquad \cup \{(\rfields_1(\glab), \cfields{\normof{\xtype}}(\glab)) \mid \glab \in \dom{\rfields_1} \cap \dom{\cfields{\normof{\xtype}}}\}), \\
             & \qquad \ [\xtype/\vtype](\kenv) \cup \{(\xbase{\xtype}, [\xtype/\vtype](\rkind{\lfields_2 + (\lfields_1 - (\rfields_2 + (\lfields_2 - \cfields{\normof{\xtype}})))}{\rfields_2 + (\rfields_1 - \cfields{\normof{\xtype}})}))\}, \\
             & \qquad \ [\xtype/\vtype](\subs)\cup \{(\vtype, \xtype)\}, [\xtype/\vtype](\skenv) \cup \{(\vtype, \rkind{\lfields_1}{\rfields_1})\}) \\
             & \qquad \text{if} \ \dom{\lfields_1} \cap \dom{\cfields{\normof{\xtype}}} = \emptyset, \dom{\rfields_1} \cap \dom{\rfields_2 + (\lfields_2 - \cfields{\normof{\xtype}})} = \emptyset , \ \text{and} \ \vtype \not\in \ftv{\xtype} \\
             viii) & \ (\geqs \cup \{(\vtype^1 \pm^1_1 \{\glab^1_1 : \mtype^1_1\} \cdots \pm^1_i \{\glab^1_i : \mtype^1_i\} \cdots \pm^1_n \{\glab^1_n : \mtype^1_n\}, \\
             & \qquad \quad \ \ \vtype^2 \pm^2_1 \{\glab^2_1 : \mtype^2_1\} \cdots \pm^2_j \{\glab^2_j : \mtype^2_j\} \cdots \pm^2_m \{\glab^1_m : \mtype^2_m\})\}, \kenv, \subs, \skenv) \Ra \\
             & \qquad (\geqs \cup \{(\mtype^1_i, \mtype^2_j), (\vtype^1 \pm^1_1 \{\glab^1_1 : \mtype^1_1\} \cdots \pm^1_{i-1} \{\glab^1_{i-1} : \mtype^1_{i-1}\} \pm^1_{i+1} \cdots \pm^1_n \{\glab^1_n : \mtype^1_n\}, \\
             & \qquad \qquad \qquad \qquad \quad \vtype^2 \pm^2_1 \{\glab^2_1 : \mtype^2_1\} \cdots \pm^2_{j-1} \{\glab^2_{j-1} : \mtype^2_{j-1}\} \pm^2_{j+1} \cdots \pm^2_m \{\glab^2_m : \mtype^2_m\})\}, \kenv, \subs, \skenv) \\
             & \qquad \text{if} \ (\pm^1_i = \pm^2_j \wedge \glab^1_i = \glab^2_j), \forall i < k \leq n : \glab^1_k \not= \glab^2_i, \ \text{and} \ \forall j < r \leq m : \glab^2_r \not=\glab^2_j \\
             ix) & \ (\geqs \cup \{(\vtype^1 \pm^1_1 \{\glab^1_1 : \mtype^1_1\} \cdots \pm^1_n \{\glab^1_n : \mtype^1_n\}, \vtype^2 \pm^2_1 \{\glab^2_1 : \mtype^2_1\} \cdots \pm^2_m \{\glab^2_m : \mtype^2_m\})\}, \\
             & \qquad \kenv \cup \{(\vtype_1, \rkind{\lfields_1}{\rfields_1}), (\vtype_2, \rkind{\lfields_2}{\rfields_2})\}, \subs, \skenv) \Ra \\
             & \qquad (\subs_{ix)}(\geqs \cup \{(\lfields_1(\glab), \lfields_2(\glab)) \mid \glab \in \dom{\lfields_1} \cap \dom{\lfields_2}\} \cup \{(\rfields_1(\glab), \rfields_2(\glab)) \mid \glab \in \dom{\rfields_1} \cap \dom{\rfields_2}\}), \\
             & \qquad \ \subs_{ix)}(\kenv) \cup \{(\vtype, \subs_{ix)}(\rkind{\lfields_1 + \lfields_2}{\rfields_1 + \rfields_2})\}, \\
             & \qquad \ \subs_{ix)}(\subs) \cup \{(\vtype_1, \vtype \pm^2_1 \{\glab^2_1 : \mtype^2_1\} \cdots \pm^2_m \{\glab^2_m : \mtype^2_m\}), (\vtype_2, \vtype \pm^1_1 \{\glab^1_1 : \mtype^1_1\} \cdots \pm^1_n \{\glab^1_n : \mtype^1_n\})\} \\
             & \qquad \ \subs_{ix)}(\skenv) \cup \{(\vtype_1, \rkind{\lfields_1}{\rfields_1}), (\vtype_2, \rkind{\lfields_2}{\rfields_2})\}) \\ 
             & \qquad \ \text{where} \ \subs_{ix)} = [\vtype \pm^2_1 \{\glab^2_1 : \mtype^2_1\} \cdots \pm^2_m \{\glab^2_m : \mtype^2_m\}/\vtype_1, \vtype \pm^1_1 \{\glab^1_1 : \mtype^1_1\} \cdots \pm^1_n \{\glab^1_n : \mtype^1_n\}/\vtype_2] \\
             & \qquad \text{if} \ \dom{\lfields_1} \cap \dom{\rfields_2} = \emptyset, \dom{\rfields_1} \cap \dom{\lfields_1} = \emptyset, \vtype_1 \not\in \ftv{\vtype^2 \pm^2_1 \{\glab^2_1 : \mtype^2_1\} \cdots \pm^2_m \{\glab^2_m : \mtype^2_m\}}, \\
             & \qquad \quad \vtype_2 \not\in \ftv{\vtype^1 \pm^1_1 \{\glab^1_1 : \mtype^1_1\} \cdots \pm^1_n \{\glab^1_n : \mtype^1_n\}}, \forall 1 \leq i \leq n, 1 \leq j \leq m, \glab^1_i \not= \glab^1_j \ \text{and} \ \vtype \ \text{is fresh}
    \end{align*}}
    \vspace{-0.2in}
    \caption{Kinded Unification}
  \label{fig:kindunif}
\end{figure}
Since the constructions present in rule \textit{vii)} from Definition~\ref{def:unifalg} are somewhat intricate, we will state the following facts in hope that they will help the reader better understand this rule. Let $\vtype$ be type variable and $\xtype$ an extensible type, both well formed under some kind assignment $\kenv$, such that $\xbase{\xtype} = \vtype_{\xtype}$, $\kenv(\vtype) = \rkind{\lfields_1}{\rfields_1}$, $\kenv(\vtype_{\xtype}) = \rkind{\lfields_2}{\rfields_2}$. Also, let us assume that $\dom{\lfields_1} \cap \dom{\cfields{\normof{\xtype}}} = \emptyset$, $\dom{\rfields_1} \cap \dom{\rfields_2 + (\lfields_2 - \cfields{\normof{\xtype}})} = \emptyset$ and $\vtype \not\in \ftv{\xtype}$:
\begin{itemize}
    \item Any field that appears in $\lfields_2$ was either introduced by the (Sel), (Modif) or (Contr) rule. This means that the set of the labels that appear in type contractions of $\xtype$ is $\dom{\cfields{\normof{\xtype}}} \subseteq \dom{\lfields_2}$. 
    \item Any field that appears on $\rfields_2$ was introduced by the (Ext) rule. This means that the set of the labels that appear in type extensions of $\xtype$ is $\dom{\rfields_2} \subseteq \dom{\efields{\normof{\xtype}}}$.
    \item The set of fields that are guaranteed to appear in a record typed with $\xtype$ is represented by the function $\rfields_2 + (\lfields_2 - \cfields{\normof{\xtype}})$ and the set of fields that are guaranteed to \textit{not} appear in a record typed with $\xtype$ is represented by the function $\cfields{\normof{\xtype}}$.
    \item The set of fields that are guaranteed to exist by $\kenv(\vtype)$, but not by $\xtype$, is represented by the function $\lfields_1 - (\rfields_2 + (\lfields_2 - \cfields{\normof{\xtype}})))$ and the set of fields that are guaranteed \textit{not} to exist by $\kenv(\vtype)$, but not by $\xtype$, is represented by the function $\rfields_1 - \cfields{\normof{\xtype}}$.
\end{itemize}
Note that we use $\normof{\xtype}$ instead of $\xtype$ in $\cfields{\normof{\xtype}}$ so that constructing this function is more straightforward. That being said, to keep proofs simpler, we do not normalize extensible types during unification.

\begin{theorem}\label{thm:unifalg}
    Algorithm $\U$ takes any kinded set of equations and computes a most general unifier, if one exists; otherwise it fails.
\end{theorem}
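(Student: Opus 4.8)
The plan is to prove correctness of $\U$ in the standard way for unification-by-transformation algorithms: establish an invariant preserved by every rule, show termination, and then read off soundness and most-generality from the invariant applied to the terminal state. Let me think carefully about what invariant captures the relationship between the working tuple $(\geqs, \kenv, \subs, \skenv)$ and the original input $(\kenv_0, \geqs_0)$.

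**What the algorithm is computing.** The algorithm starts from $(\geqs_0, \kenv_0, \emptyset, \emptyset)$ and rewrites until $\geqs$ is empty (success) or stuck with nonempty $\geqs$ (failure). On success it returns $(\kenv', \subs)$. So I need: (1) if it returns $(\kenv', \subs)$, then $(\kenv', \subs)$ is a most general unifier of $(\kenv_0, \geqs_0)$; and (2) if a unifier exists, the algorithm succeeds (it doesn't fail spuriously). Termination gives that the algorithm always halts.

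**The invariant.** The key observation in these proofs (following Ohori closely) is that the tuple encodes a decomposition: any unifier of the original system factors through the current state. Concretely, I'd want the invariant to say that for any kinded substitution $(\kenv_3, \theta)$ that respects $\kenv_0$ and satisfies $\geqs_0$, there is a factorization $\theta = \theta' \circ \subs$ where $\theta'$ respects $\kenv$, $(\kenv_3, \theta)$ is recovered correctly, $\theta'$ satisfies $\geqs$, and $\theta'$ agrees with $\theta$ appropriately on $\skenv$ — and conversely, that $\subs$ together with any such $\theta'$ yields a unifier of the original. The pair $\skenv$ records solved kind constraints so that the composite $\subs$ is known to respect $\kenv_0$ on its domain. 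This bidirectional invariant is what simultaneously yields soundness (the returned substitution unifies) and completeness/most-generality (every unifier factors through it).

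**The main obstacle.** The hard part will be verifying the invariant for the genuinely new rules, namely $vii)$, $viii)$, and $ix)$, which unify type variables against extensible types and extensible types against each other. For Ohori's original rules $i)$–$vi)$ the preservation argument is routine and essentially inherited from his proof. For rule $vii)$ the delicate point is that the kind bookkeeping on the base variable $\xbase{\xtype}$ — the functions $\rfields_2 + (\lfields_2 - \cfields{\normof{\xtype}})$ for guaranteed-present fields and $\cfields{\normof{\xtype}}$ for guaranteed-absent fields, together with the ``leftover'' constraints $\lfields_1 - (\cdots)$ and $\rfields_1 - \cfields{\normof{\xtype}}$ transferred onto $\xbase{\xtype}$ — correctly and completely encode the original kind constraint on $\vtype$ once $\vtype$ is replaced by $\xtype$. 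I would justify this using Lemma~\ref{lem:kindsubs} (substitution respects kinding) and Proposition~\ref{prop:preserve} (reduction preserves kind restrictions), so that working with $\normof{\xtype}$ is sound; Proposition~\ref{prop:iden_norm} ensures normalization commutes with substitution, which is what lets me reason about $\cfields{\normof{\xtype}}$ under $\theta$. The side-conditions $\dom{\lfields_1} \cap \dom{\cfields{\normof{\xtype}}} = \emptyset$ and $\dom{\rfields_1} \cap \dom{\rfields_2 + (\lfields_2 - \cfields{\normof{\xtype}})} = \emptyset$ are exactly Proposition~\ref{prop:kindrules} instantiated at the level of equations, guaranteeing no contradictory present/absent demand; I would argue these conditions are not merely sufficient but necessary, i.e.\ their failure means no unifier exists, which is what rules out spurious failure. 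For rules $viii)$ and $ix)$ I would rely on Proposition~\ref{prop:canonicity} to ensure each label occurs at most once so that the positional matching of $\pm_i$ operators is well defined and the decomposition of the two extensible types into a common fresh base $\vtype$ is unique.

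**Termination and conclusion.** For termination I would exhibit a well-founded measure decreasing under every rule: lexicographically, the number of distinct unsolved type variables in $\geqs \cup \kenv$ (decreased by $ii)$, $iii)$, $iv)$, $vii)$, $ix)$, which eliminate a variable), then the total size of types in $\geqs$ (decreased by $v)$, $vi)$, $viii)$ and by rule $i)$ discarding a trivial equation). Once termination and the invariant are in hand, the theorem follows: at a terminal state with $\geqs = \emptyset$, the invariant degenerates to ``$(\kenv', \subs)$ respects $\kenv_0$ and $\subs$ satisfies $\geqs_0$,'' giving that it is a unifier, and the factorization direction gives most generality; at a terminal state with $\geqs \neq \emptyset$ and no applicable rule, I would check by case analysis on the shape of a remaining equation that the failure of every side-condition forces nonunifiability, so the algorithm fails only when no unifier exists.
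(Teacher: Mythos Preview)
The paper states Theorem~\ref{thm:unifalg} without proof, so there is no argument in the text to compare against directly. Your proposal follows the standard invariant-preservation method for unification-by-transformation (essentially Ohori's proof extended to the three new rules), and that is the natural and expected route here; the overall structure---bidirectional invariant, lexicographic termination measure, case analysis on stuck states---is sound, and your identification of Lemma~\ref{lem:kindsubs}, Proposition~\ref{prop:preserve} and Proposition~\ref{prop:iden_norm} as the tools needed for rule~$vii)$ is on target.

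One correction worth making before you flesh this out: you invoke Proposition~\ref{prop:canonicity} to guarantee single occurrence of labels when handling rules~$viii)$ and~$ix)$, but the paper explicitly notes that extensible types are \emph{not} normalized during unification, so canonicity does not apply to the types sitting in $\geqs$. The uniqueness you need is instead supplied by the explicit side-conditions of those rules (rule~$viii)$ forbids the matched label from reappearing further to the right; rule~$ix)$ requires the two extensible types to share no operation labels at all). This is a bookkeeping point rather than a structural flaw. Also, when arguing termination for rule~$ix)$, be explicit that although a fresh variable $\vtype$ is introduced, two variables $\vtype_1,\vtype_2$ are simultaneously moved out of $\kenv$, so the net variable count still drops by one.
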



\begin{example}
    Let $\kenv = \{\vtype :: \rkind{}{\glab : \vtype''}, \vtype' :: \rkind{\glab : \vtype''}{}, \vtype'' :: \ukind\}$. Then $\U(\kenv, \{(\cntype{\extype{\vtype}{\glab}{\vtype''}}{\glab}{\vtype''}, \cntype{\vtype'}{\glab}{\vtype''})\} = (\{\vtype :: \rkind{\glab : \vtype'}{}\}, \{(\vtype', \extype{\vtype}{\glab}{\vtype''})\})$:
    \begin{align*}
        & (\{(\cntype{\extype{\vtype}{\glab}{\vtype''}}{\glab}{\vtype''},\cntype{\vtype'}{\glab}{\vtype''})\}, \kenv, \emptyset, \emptyset) \\
        \overset{\textit{viii)}}{\Ra} & (\{(\extype{\vtype}{\glab}{\vtype''}, \vtype')\}, \kenv, \emptyset, \emptyset) \\
        \overset{\textit{vii)}}{\Ra} & (\emptyset, \{\vtype :: \rkind{}{\glab : \vtype''}, \vtype'' :: \ukind\}, \{(\vtype', \extype{\vtype}{\glab}{\vtype''})\}, \{(\vtype', \rkind{\glab : \vtype''}{})\})
    \end{align*}
    Note that the first transformation is given by rule \textit{viii)} and not by rule \textit{ix)}. This is the case because the latter rule can only be applied whenever two extensible types have no matching type extensions or contractions.
\end{example}
It is true that our unification algorithm introduces an overhead associated with the search of matching type extensions and contraction (see rules \textit{viii)} and \textit{ix)}) when compared with Ohori's original algorithm. That being said, we believe that with some care during the algorithm's implementation this overhead should be negligible.
\subsection{Type Inference}
Using the kinded unification, we extend Ohori's type inference algorithm to the cases of record field extension and record field removal. 
The \emph{type inference algorithm}, $\infer{\kenv}{\tenv}{\gterm}$, is defined in Figure~\ref{fig:typeinf}. 
        Given a  kind assignment $\kenv$, a type assignment $\tenv$, and a term $\gterm$, the Type Inference Algorithm $\infer{\kenv}{\tenv}{\gterm}$ returns a tuple $(\kenv', \subs, \ttype)$.  It is implicitly assumed that the algorithm fails if unification or any of the recursive calls fail.

\begin{figure}[htbp] 
    {\small
        \begin{align*}
                 i) & \ \infer{\kenv}{\tenv}{\vterm} =  \text{if} \ \vterm \not \in \dom{\tenv} \ \text{then} \ \textit{fail} \\
                 & \ \quad \qquad \qquad \qquad \text{else let} \ \forall \vtype_1::\gkind_1 \cdots \forall \vtype_n::\gkind_n.\mtype = \tenv(\vterm), \\
                 & \ \quad \quad \qquad \qquad \qquad \qquad \ \subs = [\beta_1/\vtype_1, \dots, \beta_n/\vtype_n] \ (\beta_1, \dots, \beta_n \ \text{are fresh}) \\
                 & \qquad \qquad \qquad \quad \ \text{in} \ (\kenv \{\beta_1::\subs(\gkind_1), \dots, \beta_n::\subs(\gkind_n)\}, \id, \normof{\subs(\mtype)}) \\
                 ii) & \ \infer{\kenv}{\tenv}{\abs{\vterm}{\gterm}} = \text{let} \ (\kenv_1, \subs_1, \mtype_1) = \infer{\kenv \{\vtype::\ukind\}}{\tenv \{\vterm : \vtype\}}{\gterm} \ (\vtype \ \text{fresh}) \\
                 & \qquad \qquad  \qquad \qquad \ \ \ \  \text{in} \ (\kenv_1, \subs_1, \atype{\normof{\subs_1(\vtype)}}{\mtype_1}) \\
                 iii) & \ \infer{\kenv}{\tenv}{\app{\gterm_1}{\gterm_2}} = \text{let} \ (\kenv_1, \subs_1, \mtype_1) = \infer{\kenv}{\tenv}{\gterm_1} \\
                 &  \qquad \qquad  \qquad  \qquad  \qquad  \ \  (\kenv_2, \subs_2, \mtype_2) = \infer{\kenv_1}{\subs_1(\tenv)}{\gterm_2} \\
                 &  \qquad \qquad  \qquad  \qquad  \qquad  \ \ (\kenv_3, \subs_3) = \unify{\kenv_2 \{\vtype :: \ukind\}}{\{(\subs_2(\mtype_1), \atype{\mtype_2}{\vtype})\}} \ (\vtype \ \text{fresh}) \\
                 & \qquad \qquad  \qquad  \qquad  \qquad   \text{in} \ (\kenv_3, \subs_3 \circ \subs_2 \circ \subs_1, \normof{\subs_3(\vtype)}) \\
                 iv) & \ \infer{\kenv}{\tenv}{\letin{x}{\gterm_1}{\gterm_2}} = \text{let} \ (\kenv_1, \subs_1, \mtype_1) = \infer{\kenv}{\tenv}{\gterm_1} \\
                 & \qquad \qquad \qquad \qquad \qquad \qquad \qquad \ \ \ \  (\kenv'_1, \sigma) = \cls{\kenv_1}{\subs_1(\tenv)}{\mtype_1} \\
                 & \qquad \qquad \qquad \qquad \qquad \qquad \qquad \ \  \ \  (\kenv_2, \subs_2, \mtype_2) = \infer{\kenv'_1}{(\subs_1(\tenv))\{\vterm : \ttype\}}{\gterm_2} \\
                 & \qquad \qquad \qquad \qquad \qquad \qquad \qquad \ \  \text{in} \ (\kenv_2, \subs_2 \circ \subs_1, \mtype_2) \\
                 v) & \ \infer{\kenv}{\tenv}{\{l_1 = \gterm_1, \dots, l_n = \gterm_n\}} = \\
                 & \qquad \text{let} \ (\kenv_1, \subs_1, \mtype_1) = \infer{\kenv}{\tenv}{\gterm_1} \\
                 & \qquad \qquad (\kenv_i, \subs_i, \mtype_i) = \infer{\kenv_{i-1}}{\subs_{i-1} \circ \cdots \circ \subs_1 (\tenv)}{\gterm_i} \ (2 \le i \le n) \\
                 & \qquad \text{in} \ (\kenv_n, \subs_n \circ \cdots \circ \subs_2 \circ \subs_1, \{l_1 : \normof{\subs_n \circ \cdots \circ \subs_2(\mtype_1)}, \dots, l_i : \normof{\subs_n \circ \cdots \circ \subs_{i+1}(\mtype_i)}, \dots, l_n : \mtype_n\}) \\
                 vi) & \ \infer{\kenv}{\tenv}{\sel{\gterm}{l}} = \text{let} \ (\kenv_1, \subs_1, \mtype_1) =  \infer{\kenv}{\tenv}{\gterm} \\
                 & \qquad \qquad \qquad \qquad \ \ \ (\kenv_2, \subs_2) = \unify{\kenv_1 \{\vtype_1 :: \ukind, \vtype_2 :: \rkind{l : \vtype_1}{}\}}{\{(\vtype_2, \mtype_1)\}} \ (\vtype_1, \vtype_2 \ \text{fresh}) \\
                 & \qquad \qquad \qquad \qquad \ \text{in} \ (\kenv_2, \subs_2 \circ \subs_1, \normof{\subs_2(\vtype_1)}) \\
                 vii) & \ \infer{\kenv}{\tenv}{\modif{\gterm_1}{l}{\gterm_2}} = \\
                 & \qquad \text{let} \ (\kenv_1, \subs_1, \mtype_1) = \infer{\kenv}{\tenv}{\gterm_1} \\
                 & \qquad \qquad (\kenv_2, \subs_2, \mtype_2) = \infer{\kenv_1}{\subs_1(\tenv)}{\gterm_2} \\
                 & \qquad \qquad (\kenv_3, \subs_3) = \unify{\kenv_2 \{\vtype_1 :: \ukind, \vtype_2 :: \rkind{l : \vtype_1}{}\}}{\{(\vtype_1, \mtype_2), (\vtype_2, \subs_2(\mtype_1))\}} \ (\vtype_1, \vtype_2 \ \text{fresh}) \\
                 & \qquad \text{in} \ (\kenv_3, \subs_3 \circ \subs_2 \circ \subs_1, \normof{\subs_3(\vtype_2)}) \\
                 viii) & \ \infer{\kenv}{\tenv}{\cnt{\gterm}{\glab}} = \text{let} \ (\kenv_1, \subs_1, \mtype_1) = \infer{\kenv}{\tenv}{\gterm} \\
                 & \qquad \qquad  \qquad  \qquad \quad \ \  \  (\kenv_2, \subs_2) = \unify{\kenv_1 \{\vtype_1 :: \ukind, \vtype_2 :: \rkind{\glab : \vtype_1}{}\}}{\{(\vtype_2, \mtype_1)\}} \ (\vtype_1, \vtype_2 \ \text{fresh}) \\
                 & \qquad \qquad  \qquad  \qquad  \quad \  \text{in} \ (\kenv_2, \subs_2 \circ \subs_1, \normof{\subs_2(\cntype{\vtype_2}{\glab}{\vtype_1})}) \\
                 ix) & \ \infer{\kenv}{\tenv}{\ext{\gterm_1}{\glab}{\gterm_2}} = \\
                 & \qquad \text{let} \ (\kenv_1, \subs_1, \mtype_1) = \infer{\kenv}{\tenv}{\gterm_1} \\
                 & \qquad \qquad (\kenv_2, \subs_2, \mtype_2) = \infer{\kenv_1}{\subs_1(\tenv)}{\gterm_2} \\
                 & \qquad \text{in} \ \text{if} \ \xbase{\mtype_1} \in \ftv{\mtype_2} \ \text{then} \ \textit{fail} \\
                 & \quad \qquad \ \text{else let} \ (\kenv_3, \subs_3) = \unify{\kenv_2 \{\vtype_1 :: \ukind, \vtype_2 :: \rkind{}{\glab : \vtype_1}\}}{\{(\vtype_1, \mtype_2), (\vtype_2, \subs_2(\mtype_1))\}} \ (\vtype_1, \vtype_2 \ \text{fresh}) \\
                 & \ \quad \qquad \qquad \text{in} \ (\kenv_3, \subs_3 \circ \subs_2 \circ \subs_1, \normof{\subs_3(\extype{\vtype_2}{\glab}{\vtype_1})}) \\
        \end{align*}}
        \vspace{-0.35in}
   \caption{Type inference algorithm}
   \label{fig:typeinf}
\end{figure}        

\begin{proposition}\label{prop:normalform}
    If $\infer{\kenv}{\tenv}{\gterm} = (\kenv', \subs, \mtype)$, then $\mtype$ is in normal form.
\end{proposition}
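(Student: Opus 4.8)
The plan is to proceed by induction on the structure of the term $\gterm$, equivalently on the recursion tree of $\infer{\kenv}{\tenv}{\gterm}$. The one structural fact I would isolate first is that normalization is a congruence with respect to the two non-extensible type constructors, namely the arrow and record-formation constructors. Concretely, every rewrite rule in Definition~\ref{def:rewriterules} fires only at an extensible type, i.e. at a sequence of field extensions and contractions applied to a base; no such redex ever straddles an arrow or a record-literal constructor. Hence a type $\atype{\mtype_1}{\mtype_2}$ is in normal form exactly when $\mtype_1$ and $\mtype_2$ are, and a record type $\{l_1 : \mtype_1, \dots, l_n : \mtype_n\}$ is in normal form exactly when each $\mtype_i$ is. This observation carries all the inductive cases, so I would state and justify it up front.

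With this in hand, the case analysis follows the clauses of Figure~\ref{fig:typeinf}. In clauses \textit{i)}, \textit{iii)}, \textit{vi)}, \textit{vii)}, \textit{viii)} and \textit{ix)} the returned type is written explicitly in the form $\normof{\cdots}$ (namely $\normof{\subs(\mtype)}$, $\normof{\subs_3(\vtype)}$, $\normof{\subs_2(\vtype_1)}$, $\normof{\subs_3(\vtype_2)}$, $\normof{\subs_2(\cntype{\vtype_2}{\glab}{\vtype_1})}$ and $\normof{\subs_3(\extype{\vtype_2}{\glab}{\vtype_1})}$, respectively). By Definition~\ref{def:normform} (whose well-definedness rests on the convergence of $\red$, Proposition~\ref{prop:convergence}) these are, by construction, the unique normal forms of their arguments and hence are themselves in normal form, so these cases are immediate and require no appeal to the induction hypothesis. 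A constant returning a base type $\btype$ is likewise trivially normal, since it contains no extensible subtype.

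The remaining clauses are where the congruence observation and the induction hypothesis combine. In clause \textit{ii)} the result is $\atype{\normof{\subs_1(\vtype)}}{\mtype_1}$: the domain component is explicitly normalized and the codomain $\mtype_1$ is in normal form by the induction hypothesis applied to the recursive call on $\gterm$, so by the congruence fact the arrow type is normal. In clause \textit{v)} the result is a record type whose first $n-1$ field types are explicitly of the form $\normof{\cdots}$ and whose last field type $\mtype_n$ is normal by the induction hypothesis; congruence again yields normality of the whole record type. Finally, clause \textit{iv)} returns $\mtype_2$ verbatim from the recursive call on $\gterm_2$, which is in normal form directly by the induction hypothesis. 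Note that in none of these clauses is a further substitution applied to the embedded recursive result, so no un-normalization can occur.

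I expect the only real content to lie in the congruence claim of the first paragraph; once it is granted, every case is either a direct appeal to Definition~\ref{def:normform} or a one-line combination of that definition with the induction hypothesis. The mild care needed is to confirm that no rewrite rule of Definition~\ref{def:rewriterules} ever rewrites across an arrow or a record constructor, which is visible by inspection: each left-hand side there is an extensible type whose outermost symbol is a field extension $+$ or a field contraction $-$ (or a bare record/variable base), never $\rightarrow$.
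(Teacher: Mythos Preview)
Your proposal is correct. The paper states this proposition without proof, presumably because it is immediate by inspection of Figure~\ref{fig:typeinf}: in every clause the returned type is either explicitly of the form $\normof{\cdots}$, or is built from such pieces and from recursive results using only the arrow and record constructors, across which no rule of Definition~\ref{def:rewriterules} can fire. Your write-up makes this explicit via the congruence observation and a routine induction on $\gterm$, which is exactly the argument one would expect and matches the evident intent of the algorithm's design (every output is normalized before being returned or embedded).
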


\begin{theorem}\label{thm:typeinfalg}
        The Type Inference Algorithm is sound and complete:
        \begin{itemize}
                \item (Soundness) If $\infer{\kenv}{\tenv}{\gterm} = (\kenv', \subs, \mtype)$, then $(\kenv', \subs)$ respects $\kenv$, $\kenv', \subs(\tenv) \vdash \gterm : \mtype$.
                \item (Completeness) 
                \begin{itemize}
                    \item If $\infer{\kenv}{\tenv}{\gterm} = (\kenv', \subs, \mtype)$ and $\kenv_0, \subs_0(\tenv) \vdash \gterm : \mtype_0$, for some $(\kenv_0, \subs_0)$ and $\mtype_0$ such that $(\kenv_0, \subs_0)$ respects $\kenv$, then there is some $\subs'$, such that $(\kenv_0, \subs')$ respects $\kenv'$, $\subs'(\mtype) \da \mtype_0$, and $\subs_0(\tenv) = \subs' \circ \subs(\tenv)$;
                    \item If $\infer{\kenv}{\tenv}{\gterm} = \textit{fail}$, then there is no $(\kenv_0, \subs_0)$ and $\mtype_0$ such that $(\kenv_0, \subs_0)$ respects $\kenv$ and $\kenv_0, \subs_0(\tenv) \vdash \gterm : \mtype_0$. 
                \end{itemize}
        \end{itemize}
\end{theorem}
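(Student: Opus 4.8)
The plan is to prove both statements by structural induction on the term $\gterm$, establishing soundness and the two completeness clauses simultaneously, so that the recursive-call cases (\textit{iii)}, \textit{iv)}, \textit{v)}, \textit{vii)}, \textit{ix)}) may invoke both inductive hypotheses — in particular, the completeness argument repeatedly needs the soundness conclusion to know that the intermediate outputs of $\infer{\kenv}{\tenv}{\gterm}$ respect the incoming kind assignments. The backbone of the whole development is the interplay of three facts: a composition of kind-respecting substitutions is again kind-respecting (Corollary~\ref{cor:kindsubs}); typings are preserved under kind-respecting substitutions (Lemma~\ref{lem:closedksubs}); and $\U$ returns a most general unifier whenever a unifier exists (Theorem~\ref{thm:unifalg}). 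The normalization bookkeeping introduced by the $\normof{\cdot}$ operations in Figure~\ref{fig:typeinf} is handled uniformly by Proposition~\ref{prop:iden_norm} (normalization commutes with substitution) and Proposition~\ref{prop:preserve} (reduction preserves kind restrictions), so that replacing a type by its normal form never disturbs kindedness or typeability.

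For \textbf{soundness} I would work case by case, matching each clause of $\infer{\kenv}{\tenv}{\gterm}$ to the corresponding rule of Figure~\ref{fig:typesystem}. In the variable case the fresh instantiation $\subs = [\beta_1/\vtype_1,\dots,\beta_n/\vtype_n]$ realizes a generic instance, so (Var) applies and $(\kenv',\id)$ trivially respects $\kenv$. In each recursive case I apply the induction hypothesis to obtain that the intermediate pairs respect the incoming kind assignments, compose them by Corollary~\ref{cor:kindsubs}, lift the earlier typings along the later substitutions by Lemma~\ref{lem:closedksubs}, and conclude with the matching typing rule. For the clauses that call $\U$ (application, selection, modification, contraction, extension), Theorem~\ref{thm:unifalg} guarantees that the returned $(\kenv_3,\subs_3)$ respects the freshly extended kind assignment, which is precisely what supplies the kinding premises $\kenv \Vdash \mtype_1 :: \rkind{\glab:\mtype_2}{}$ (or its dual $\kenv \Vdash \mtype_1 :: \rkind{}{\glab:\mtype_2}$ for extension) required by those rules; the side condition $\xbase{\mtype_1}\notin\ftv{\mtype_2}$ in clause \textit{ix)} matches exactly the premise of (Ext).

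For \textbf{completeness} I would again induct on $\gterm$, now assuming an external typing $\kenv_0, \subs_0(\tenv) \vdash \gterm : \mtype_0$ with $(\kenv_0,\subs_0)$ respecting $\kenv$, and construct the witness $\subs'$ together with the invariants $\subs'(\mtype)\norm\mtype_0$ and $\subs_0(\tenv) = \subs'\circ\subs(\tenv)$. The central move at every clause invoking $\U$ is to read off, from the external typing and the inductive hypotheses, a concrete unifier of the equation set that respects the fresh-extended kind assignment; the most-general-unifier property of Theorem~\ref{thm:unifalg} then yields a factoring substitution, and $\subs'$ is assembled as the composition of these factors. The let case additionally requires reconciling the generalization step $\cls{\kenv_1}{\subs_1(\tenv)}{\mtype_1}$ with the external derivation, for which I would use transitivity of generic instance (Lemma~\ref{lem:transinst}) together with the strengthening Lemma~\ref{lem:eqinst}. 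For the failure clause, I would argue that if $\xbase{\mtype_1}\in\ftv{\mtype_2}$ or if $\U$ reports no unifier, then the assumed external typing would itself furnish exactly the forbidden unifier, contradicting Theorem~\ref{thm:unifalg}; hence no such typing can exist.

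The hard part will be the extensible-type clauses \textit{viii)} (contraction) and \textit{ix)} (extension) in the completeness direction. Here the conclusion $\subs'(\mtype)\norm\mtype_0$ holds only up to normal form rather than syntactically, so the argument must be threaded carefully through Proposition~\ref{prop:iden_norm} in order to commute $\subs'$ past the $\normof{\cdot}$ operations emitted by the algorithm, while the intricate unification rules \textit{vii)}--\textit{ix)} of Figure~\ref{fig:kindunif} — with their splitting of record kinds into the $\lfields$/$\rfields$ components and their use of $\cfields{\normof{\xtype}}$ — must be shown to be exactly general enough to capture every unifier arising from an external typing. Verifying that the kind-restriction side conditions maintained by these rules coincide, under normalization, with the premises $\kenv \Vdash \mtype_1 :: \rkind{}{\glab:\mtype_2}$ of (Ext) and $\kenv \Vdash \mtype_1 :: \rkind{\glab:\mtype_2}{}$ of (Contr) is where the bulk of the technical work lies.
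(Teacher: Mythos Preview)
The paper does not actually include a proof of this theorem in the text provided; the theorem statement is followed immediately by an example, so there is no explicit argument to compare against. That said, the auxiliary results the paper develops (Corollary~\ref{cor:kindsubs}, Lemma~\ref{lem:closedksubs}, Lemma~\ref{lem:eqinst}, Lemma~\ref{lem:transinst}, Theorem~\ref{thm:unifalg}, and Propositions~\ref{prop:preserve} and~\ref{prop:iden_norm}) are exactly the ingredients one needs for the standard structural induction you outline, and your use of each of them is in the right place: composition of kind-respecting substitutions for chaining recursive calls, Lemma~\ref{lem:closedksubs} for transporting earlier typings, Theorem~\ref{thm:unifalg} both to obtain soundness of the unification steps and to factor an external typing through the computed most general unifier, and Lemmas~\ref{lem:transinst}/\ref{lem:eqinst} for the generalization step in the \texttt{let} case. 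Your proposal is therefore the intended proof, and your identification of clauses \textit{viii)} and \textit{ix)} as the places where the normal-form bookkeeping via Proposition~\ref{prop:iden_norm} must be invoked carefully is accurate.
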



\begin{example}
    Let $\kenv = \{\vtype_1 :: \rkind{}{\glab : \vtype_2}, \vtype_2 :: \ukind\}$ and $\tenv = \{\vterm : \vtype_1, y : \vtype_2\}$. Then we can apply the type inference algorithm to $\sel{\ext{\vterm}{\glab}{y}}{\glab}$ and get the following results:
    \begin{align*}
        & \infer{\kenv}{\tenv}{\sel{\ext{\vterm}{\glab}{y}}{\glab}} = (\kenv, \{(\vtype_3, \vtype_2), (\vtype_4, \vtype_1), (\vtype_5, \vtype_2), (\vtype_6, \extype{\vtype_1}{\glab}{\vtype_2})\}, \vtype_2) \\
        & \infer{\kenv}{\tenv}{\ext{\vterm}{\glab}{y}} = (\kenv, \{(\vtype_3, \vtype_2), (\vtype_4, \vtype_1)\}, \extype{\vtype_1}{\glab}{\vtype_2}) \\
        & \infer{\kenv}{\tenv}{\vterm} = (\kenv, \id, \vtype_1) \\
        & \infer{\kenv}{\tenv}{y} = (\kenv, \id, \vtype_2)
    \end{align*}
\end{example}
\section{Related Work}
\label{sec:rw}

There are several alternative type systems in the literature that deal with polymorphic records with some form of extensibility. The most common approaches are based on subtyping~\cite{Cardelli90,Jategaonkar93} or row variables~\cite{Wand87,HarperM93,Remy92,Wand89}, but there are also others based on flags~\cite{Remy89,Remy94}, predicates~\cite{HarperP91,Gaster98} and scope variables~\cite{Leijen05}, to name a few.

The approaches using subtyping~\cite{Cardelli88,CardelliW85,Jategaonkar93,PT1994} have been widely used to build polymorphic type systems with records, in particular for object-oriented programming languages. However, there are several issues that arise when combining record polymorphism with a strong mechanism of subtyping. In the the presence of a subtyping relation $(r_1 \leq r_2)$, meaning that $r_1$ contains at least the fields in $r_2$, a selection operator $(\_.l)$ can have a type $\forall \alpha.\forall \beta \leq \{l:\alpha\}. \beta \rightarrow \alpha$, meaning that a label $l$ can be selected for a record with type $\beta$, if $\beta$ is a subtype of $\{l:\alpha\}$. This leads to additional information on the remaining fields of  $\beta$ being lost, making it harder to define operations dealing with extensibility. This was overcome by moving to a second-order type system~\cite{Cardelli90,CardelliM91}, however the resulting type system relied on explicitly typed extensible records, yielding a system where type-checking and subtyping are decidable, but type-inference is not addressed.  The existence of a  subtyping relation also complicates compilation, (again) because information on the exact type of a record can be lost, leading to the need of incorporating some degree of dynamic typing at runtime.

Several approaches dealing with extensibility use Wand's notion of row variables~\cite{Wand87}, which are variables that range over sets of field types, allowing for incremental construction of records.  However, unlike the approach followed in our paper, operations in~\cite{Wand87} are unchecked, meaning that, when extending a row with a field  $l$, one does not check if this introduces a new field or replaces an existing one, leading to programs for which a principal type does not exist. Flexible systems with extensible records have been constructed over the mechanism of row variables~\cite{Remy89,Remy94}, extended with the notion of flags, yielding a system with extensible records and principal types. Flags are used to provide information on which fields the record must have, and which it cannot have. However, despite the flexibility to define various powerful operations on records, compilation is not dealt efficiently in the presence of flags, due in part to the ability to support some unchecked operations.

Harper and Pierce~\cite{HarperP91} have studied type systems for extensible records where presence and absence of fields is given by predicates on types, thus leading to a system with checked operations, but without dealing with type inference or compilation. The use of predicates was further developed by Jones in his general theory of qualified types~\cite{Jones94,Jones94a}, where extensible records are presented has a special case. Building on that is the approach by Gaster and Jones~\cite{Gaster96,Gaster98} that combines the notion of row variables with the notion of qualified types, and is perhaps the work that is more closely related to ours. In this approach, row extension is used to capture positive information about the fields, while predicates are used to capture negative information, thus avoiding duplicated labels. In our approach both negative and positive information is given by the kind restrictions, resulting in a type system where constraints on label addition and label removal are treated in a uniform way. 

Building on the work of Wand, Rémy and Gaster and Jones~\cite{Wand87,Remy94,Gaster96}, Leijen has developed a polymorphic type system with extensible records based on scoped labels~\cite{Leijen05}. In this approach, duplicate labels are allowed and retained, and an appropriate scoping mechanism is provided to prevent ambiguity and still allow for safe operations. This provides a notion of free extension where update and extension operations are clearly separated, yielding a system that is flexible from the user's point of view. Our approach does not allow for duplicated labels and uses the kinding restrictions to implement a strict notion of extensibility. 

In addition to record extension, there are several systems that deal with other powerful record operations such as concatenation~\cite{HarperP91,Remy92,Wand89}, or the natural join~\cite{BunemanO96} (an operation largely used in database programming, where labeled records play an important role). We choose not to include these operation as they tend to complicate both the implementation and the typing analysis, and simply follow Ohori's approach~\cite{Ohori95}, which efficiently supports the basic operations on records, extending it with two basic operations that support extensibility. Nevertheless, Rémy~\cite{Remy92} has developed an encoding of record concatenation via record extension, thus proving that a system supporting checked record extension can also support some form of record concatenation.

Our decision to build our work on Ohori's record calculus was highly motivated by the existence of an efficient compilation for such a calculus. This was achieved by translating the polymorphic record calculus into an implementation calculus, in which records are represented as vectors whose elements are accessed by direct indexing based on a total order $\ll$ on the set of labels.  Polymorphic functions containing polymorphic record operations are compiled through the insertion of appropriate index abstractions, indicated by the kinded quantifiers given by the type of the function.
Ohori then shows that this algorithm preserves types and that the compilation calculus has the subject reduction property, thus showing that the compilation algorithm preserves the operational behaviour of the original polymorphic record calculus. We believe that an efficient compilation algorithm can also be defined for the calculus developed in this work because variables still range over complete record types and the negative information that was added to kinds only affects type inference, not compilation.
For these reasons, it should be possible to extend the compilation method in~\cite{Ohori95} to our extensible records.
\section{Conclusions and Future Work}
\label{sec:conc}
We have presented an ML-style polymorphic record calculus with extensible records, developed a typing system based on the notion of kinded quantification and a sound and complete type inference algorithm, based on kinded unification. While records are a basic commodity in a variety of programming languages, regarding the use of more powerful operations on records, there is still a gap between the theory and the practice, with no consensus on what is the best approach. With this work, we hope to contribute to that discussion.

Ohori's main goal was to support the most common operations dealing with polymorphic records, while maintaining an efficient compilation method. As already stated in the related work, although we do not deal with compilation in this paper, this is something that we would like to address in future work.

\bibliographystyle{eptcs}
\bibliography{refs}
\end{document}